\newtheorem{lemma}{Lemma}
\newtheorem{corollary}{Corollary}
\newtheorem{theorem}{Theorem}
\newtheorem{remark}{Remark}
\newtheorem{proposition}{Proposition}
\newcommand{\MLT}{\textsc{Max Lin AA}}
\newtheorem{krule}{Reduction Rule}
\begin{document}

\title{Systems of Linear Equations over $\mathbb{F}_2$ and Problems Parameterized Above Average}

\author{
R. Crowston, G. Gutin, M. Jones, E.J. Kim\\
{\small Royal Holloway, University of London}\\[-3pt]
{\small Egham, Surrey, TW20 0EX, UK}\\[-3pt]
{\small \url{{robert|gutin|markj|eunjung}@cs.rhul.ac.uk}}\\
I.Z. Ruzsa\\
{\small Alfr{\'e}d R{\'e}nyi Institute of Mathematics, Hungarian Academy of Sciences}\\[-3pt]
{\small H-1053, Budapest, Hungary, \url{ruzsa@renyi.hu}}
}
\date{}
\maketitle

\begin{abstract}
\noindent
In the problem Max Lin, we are given a system $Az=b$ of $m$
linear equations with $n$ variables
over $\mathbb{F}_2$ in which each equation is assigned a positive weight
and we wish to find an assignment of
values to the variables that maximizes the excess, which is the total weight of
satisfied equations minus the total weight of falsified equations. Using an algebraic
approach, we obtain a lower bound for the maximum excess.

Max Lin Above Average (Max Lin AA) is a parameterized version of Max Lin
introduced by Mahajan et al. (Proc. IWPEC'06 and J. Comput. Syst. Sci.  75, 2009). In Max Lin AA
all weights are integral and we are to
decide whether the maximum excess is at least
$k$, where $k$ is the parameter.

It is not hard to see that we may assume that no two equations in $Az=b$ have the same left-hand side
and $n={\rm rank A}$. Using our maximum excess results, we prove that, under these assumptions, Max Lin AA is fixed-parameter tractable for a wide special case: $m\le 2^{p(n)}$ for an arbitrary fixed function $p(n)=o(n)$. This result generalizes earlier results by Crowston et al. (arXiv:0911.5384) and Gutin et al. (Proc. IWPEC'09). We also prove that Max Lin AA is polynomial-time solvable for every fixed $k$ and, moreover, Max Lin AA is in the parameterized complexity class W[P].

Max $r$-Lin AA is a special case of Max Lin AA, where each equation has at most $r$ variables. In Max Exact $r$-SAT AA we are given a multiset of $m$ clauses on $n$ variables such that each clause has $r$ variables and asked whether there is a truth assignment to the $n$ variables that
satisfies at least $(1-2^{-r})m + k2^{-r}$ clauses.
Using our maximum excess results, we prove that for each fixed $r\ge 2$, Max $r$-Lin AA and Max Exact $r$-SAT AA can be solved in time $2^{O(k \log k)}+m^{O(1)}.$ This improves $2^{O(k^2)}+m^{O(1)}$-time algorithms for the two problems obtained by Gutin et al. (IWPEC 2009) and Alon et al. (SODA 2010), respectively.

It is easy to see that maximization of arbitrary pseudo-boolean functions, i.e., functions $f:\ \{-1,+1\}^n\rightarrow \mathbb{R}$, represented by their Fourier expansions
is equivalent to solving Max Lin.  Using our main maximum excess result,
we obtain a tight lower bound on the maxima of pseudo-boolean functions.
\end{abstract}

\section{Introduction}\label{section:intro}

In the problem \textsc{Max Lin}, we are given a system $Az=b$ of $m$
linear equations in $n$ variables
over $\mathbb{F}_2$ in which each equation is assigned a positive weight
and we wish to find an assignment of
values to the variables in order to maximize the total weight of
satisfied equations. A special case of \textsc{Max Lin} when each equation  has at most $r$ variables is called {\sc Max $r$-Lin}.

Various algorithmic aspects of \textsc{Max Lin} have been well-studied
(cf. \cite{AGK04,Hastad,HastadVenkatesh02}). Perhaps, the best known result on
\textsc{Max Lin}
is the following inapproximability theorem of H{\aa}stad \cite{Hastad}:
 unless P=NP, for each $\epsilon >0$ there is no polynomial time algorithm for
distinguishing instances of \textsc{Max 3-Lin} in which at least
$(1-\epsilon)m$ equations can be simultaneously satisfied from instances
in which less than $(1/2 + \epsilon)m$ equations can be simultaneously satisfied.

Notice that maximizing the total weight of satisfied equations is equivalent to maximizing the {\em excess},
which is the total weight of satisfied equations minus the total weight of falsified equations. In Section \ref{CSPsec}, we investigate lower bounds for the maximum excess. Using an algebraic approach, we prove the following main result: Let $Az=b$ be a \textsc{Max Lin} system such that ${\rm rank} A = n$ and no pair of equations has the same left-hand side, let $w_{\min}$ be the minimum weight of an equation in $Az=b$, and let $k\ge 2.$ If $k\le m\le 2^{n/(k-1)}-2$, then the maximum excess of $Az=b$ is at least $k\cdot w_{\min}$. Moreover, we can find an assignment that achieves an excess of at least $k\cdot w_{\min}$ in time $m^{O(1)}$.

Using this and other results of Section \ref{CSPsec} we prove parameterized complexity results of Section \ref{LBsec}.
To describe these results we need the following notions, most of which can be found in monographs \cite{DowneyFellows99,FlumGrohe06,Niedermeier06}.

A \emph{parameterized problem} is a subset $L\subseteq \Sigma^* \times
\mathbb{N}$ over a finite alphabet $\Sigma$. $L$ is
\emph{fixed-parameter tractable} if the membership of an instance
$(x,k)$ in $\Sigma^* \times \mathbb{N}$ can be decided in time
$f(k)|x|^{O(1)},$ where $f$ is a computable function of the
parameter $k$.
When the decision time is replaced by the much more powerful $|x|^{O(f(k))},$
we obtain the class XP, where each problem is polynomial-time solvable
for any fixed value of $k.$ There is an infinite number of parameterized complexity
classes between FPT and XP (for each integer $t\ge 1$, there is a class W[$t$]) and they form the following tower:
$$FPT \subseteq W[1] \subseteq W[2] \subseteq \cdots \subseteq W[P] \subseteq XP.$$
Here W[P] is the class of all parameterized problems $(x,k)$ that can be decided in $f(k)|x|^{O(1)}$ time
by a nondeterministic Turing machine that makes at most $f(k)\log |x|$ nondeterministic steps for some computable function $f$.
For the definition of classes W[$t$],
see, e.g., \cite{FlumGrohe06} (we do not use these classes in the rest of the paper).

Given a pair $L,L'$ of parameterized problems,
a \emph{bikernelization from $L$ to $L'$} is a polynomial-time
algorithm that maps an instance $(x,k)$ to an instance $(x',k')$ (the
\emph{bikernel}) such that (i)~$(x,k)\in L$ if and only if
$(x',k')\in L'$, (ii)~ $k'\leq f(k)$, and (iii)~$|x'|\leq g(k)$ for some
functions $f$ and $g$. The function $g(k)$ is called the {\em size} of the bikernel.
The notion of a bikernelization was introduced in \cite{AlonSODA2010}, where it was observed that
a parameterized problem $L$ is fixed-parameter
tractable if and only if it is decidable and admits a
bikernelization from itself to a parameterized problem $L'$.
A {\em kernelization} of a parameterized problem
$L$ is simply a bikernelization from $L$ to itself; the bikernel is the {\em kernel}, and $g(k)$ is the {\em size} of the kernel.
Due to applications, low degree polynomial size kernels are of main interest.

Note that $W/2$ is a tight lower bound on the maximum weight
of satisfiable equations in a {\sc Max Lin} system $Az=b$. Indeed, $W/2$ is the average weight of satisfied equations (as the probability of each equation to be satisfied is $1/2$) and, thus, is a lower bound; to see
the tightness consider a system of pairs of equations of the
form $\sum_{i\in I}z_i=0,\ \sum_{i\in I}z_i=1$ of weight 1.
Mahajan et al.
\cite{MahajanRamanSikdar06,MahajanRamanSikdar09} parameterized \textsc{Max
Lin} as follows: given a {\sc Max Lin} system $Az=b$, decide whether
the total weight of satisfied equations minus $W/2$ is at least $k'$,
where $W$ is the total weight of all equations and $k'$ is the parameter.
This is equivalent to asking whether the maximum excess is at least $k$,
where $k=2k'$ is the parameter.
(Note that since $k=2k'$, these two questions are equivalent from the
complexity point of view.)
Since $W/2$ is the average weight of satisfied equations, we will
call the parameterized \textsc{Max Lin} problem
\textsc{Max Lin Above Average} or \textsc{Max Lin AA}. 
Since the parameter $k$ is more convenient for us to use, in what follows we
use the version of \textsc{Max Lin AA} parameterized by $k.$

Mahajan et al. \cite{MahajanRamanSikdar06,MahajanRamanSikdar09} raised the question of determining the parameterized complexity of \textsc{Max Lin AA}.
It is not hard to see (we explain it in detail in Section \ref{CSPsec}) that we may assume that no two equations in $Az=b$ have the same left-hand side
and $n={\rm rank A}$. Using our maximum excess results, we prove that, under these assumptions, (a) \textsc{Max Lin AA} is fixed-parameter tractable if $m\le 2^{p(n)}$
for an arbitrary fixed function $p(n)=o(n)$, and (b) \textsc{Max Lin AA} has a polynomial-size kernel if $m\le 2^{n^a}$ for an arbitrary $a<1$.
We conjecture that under the two assumptions if $m<2^{an}$ for some constant $a>0$, then \textsc{Max Lin AA} is W[1]-hard, i.e., result (a) is best possible in a sense.
In addition, we prove that \textsc{Max Lin AA} is in XP (thus, \textsc{Max Lin AA} is polynomial-time solvable for every fixed $k$), and, moreover, it is in W[P].

Recall that {\sc Max $r$-Lin AA} is a special case of {\sc Max Lin AA}, where each equation has at most $r$ variables.
In {\sc Max Exact $r$-SAT AA} we are given a multiset of $m$ clauses on $n$ variables such that each clause has $r$ variables and asked whether there is a truth assignment to the $n$ variables that
satisfies at least $(1-2^{-r})m + k2^{-r}$ clauses.
Using our maximum excess results, we prove that for each fixed $r\ge 2$ Max $r$-Lin AA has a kernel with $O(k \log k)$ variables and, thus, it can be solved in time $2^{O(k \log k)}+m^{O(1)}.$ This improves a kernel with $O(k^2)$ variables for Max $r$-Lin AA obtained by Gutin et al. \cite{GutinKimSzeiderYeo09a}. Similarly, we prove that for each $r\ge 2$ {\sc Max Exact $r$-SAT AA} has a kernel with $O(k \log k)$ variables and it
can be solved in time $2^{O(k \log k)}+m^{O(1)}$ improving a kernel with $O(k^2)$ variables for {\sc Max Exact $r$-SAT AA} obtained by Alon et al. \cite{AlonSODA2010}. Note that while the kernels with $O(k^2)$ variables were obtained using a probabilistic approach, our results are obtained using an algebraic approach. Using a graph-theoretical approach Alon et al. \cite{AlonSODA2010} obtained a kernel of {\sc Max Exact 2-SAT AA} with $O(k)$ variables, but it is unlikely that their approach can be extended beyond $r=2$.

Fourier analysis of pseudo-boolean functions, i.e., functions $f:\ \{-1,+1\}^n\rightarrow \mathbb{R}$, has been used in many areas of computer science(cf. \cite{AlonSODA2010,odonn,deWolf}). In Fourier analysis, the Boolean domain is often assumed to be $\{-1,+1\}^n$ rather than more usual $\{0,1\}^n$ and we will follow this assumption in our paper. Here we use the following well-known and easy to prove fact \cite{odonn} that each function $f:\ \{-1,+1\}^n\rightarrow \mathbb{R}$ can be uniquely written as
\begin{equation}\label{eq1}f(x)=\sum_{S\subseteq [n]}c_S\prod_{i\in S}x_i,\end{equation}
where $[n]=\{1,2,\ldots ,n\}$ and each $c_S$ is a real. Formula (\ref{eq1}) is the Fourier expansion $f$, $c_S$ are the Fourier coefficients of $f$ (in the literature, these coefficients are often denoted by $\hat{f}(S)$ and we will use both $c_S$ and $\hat{f}(S)$ interchangeably), and the monomials $\prod_{i\in S}x_i$ form an orthogonal basis of (\ref{eq1}) (thus, the monomials are often written as $\chi_S(x)$ but we will use only $\prod_{i\in S}x_i$ as it is more transparent).

Optimization of pseudo-boolean functions is useful in many areas including computer science, discrete mathematics, operations research,  statistical mechanics and manufacturing;
for many results and applications of pseudo-boolean function optimization, see a well-cited survey \cite{BH02}.
In classical analysis, there is a large number of lower bounds on the maxima of trigonometric Fourier expansions, cf. \cite{Borwein}.
In Section \ref{LBsec}, we prove a sharp lower bound on the maximum of a pseudo-boolean function using its Fourier expansion. The bound can be used in algorithmics, e.g., for
approximation algorithms.

\section{Results on Maximum Excess}\label{CSPsec}

Consider two reduction rules for {\sc Max Lin} introduced in \cite{GutinKimSzeiderYeo09a} for {\MLT}.
These rules are of interest due to Lemma \ref{mGEn}.

\begin{krule}\label{rulerank}
Let $t={\rm rank} A$ and let columns $a^{i_1},\ldots ,a^{i_t}$ of $A$ be linearly independent.
Then delete all variables not in $\{z_{i_1},\ldots ,z_{i_t}\}$ from the equations of $Az=b$.
\end{krule}

\begin{krule}\label{rule1}
If we have, for a subset $S$ of $[n]$, an equation $\sum_{i \in S} z_i =b'$
with weight $w'$, and an equation $\sum_{i \in S} z_i =b''$ with weight $w''$,
then we replace this pair by one of these equations with weight $w'+w''$ if $b'=b''$ and, otherwise, by
the equation whose weight is bigger, modifying its
new weight to be the difference of the two old ones. If the resulting weight
is~0, we delete the equation from the system.
\end{krule}

\begin{lemma}\label{mGEn}
Let $A'z=b'$ be obtained from $Az=b$ by Rule~\ref{rulerank} or \ref{rule1}.
Then the maximum excess of $A'z=b'$  is equal to the maximum excess of $Az=b$.
Moreover, $A'z=b'$ can be obtained from $Az=b$ in time polynomial in $n$ and $m$.
\end{lemma}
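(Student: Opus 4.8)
The plan is to handle the two reduction rules separately, since they are quite different in nature. For Rule~\ref{rule1}, the key observation is that for any assignment $z$, an equation $\sum_{i\in S}z_i = b'$ contributes $+w'$ to the excess if it is satisfied and $-w'$ if it is falsified; so its signed contribution is $(-1)^{[\sum_{i\in S}z_i \neq b']}\,w'$. For a pair of equations with the same left-hand side $\sum_{i\in S}z_i$, the value $\sum_{i\in S}z_i$ is the same under a given $z$, so the combined signed contribution of the pair is $(-1)^{[\sum z_i \neq b']}w' + (-1)^{[\sum z_i \neq b'']}w''$. If $b'=b''$ this equals $(-1)^{[\sum z_i\neq b']}(w'+w'')$, matching the single merged equation of weight $w'+w''$; if $b'\neq b''$ it equals $\pm(w'-w'')$ with the sign matching the heavier equation's satisfaction status, which is exactly the signed contribution of the retained equation with weight $|w'-w''|$. (The case $w'=w''$ with $b'\neq b''$ contributes $0$ for every $z$, so deleting it changes nothing.) Since every other equation's contribution is untouched, the excess function $z\mapsto \mathrm{exc}(z)$ is literally unchanged as a function, hence so is its maximum.

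For Rule~\ref{rulerank}, I would argue that deleting the columns outside a maximal linearly independent set induces a weight-preserving bijection on assignments up to excess. Write the variable set as $\{z_{i_1},\dots,z_{i_t}\}$ (the kept variables) plus the rest. Each deleted column $a^j$ is an $\mathbb{F}_2$-linear combination of the kept columns, so in every equation the occurrence of $z_j$ can be rewritten; more to the point, I would show: given any assignment $z$ to all $n$ variables, the assignment $z'$ to the $t$ kept variables obtained by restriction satisfies exactly the same equations in $A'z=b'$ as $z$ does in $Az=b$ — because the left-hand side of each reduced equation evaluated at $z'$ equals the left-hand side of the original equation evaluated at $z$ minus the contribution of the deleted variables, but wait, that contribution need not vanish. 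The correct statement is the reverse direction plus a surjectivity argument: first, every assignment $z'$ to the kept variables extends (by setting the deleted variables to $0$) to an assignment of the full system achieving at least the excess of $z'$ in the reduced system; conversely, for any full assignment $z$ there is an assignment $z'$ to the kept variables with the same excess, obtained because the deleted variables, being free, can only be used to flip the right-hand sides of equations, and any such flip is already realizable by an appropriate choice among the kept variables since all left-hand sides of the reduced system already lie in the row space — here I should instead directly note that Rule~\ref{rulerank} does not change the multiset of (left-hand side, right-hand side, weight) triples at all except by erasing symbols for variables whose columns are zero in the row space, so the set of attainable excess values is unchanged. I will phrase this cleanly via the fact that an equation's satisfaction under $z$ depends only on $\langle a^{(j)}, z\rangle$, and restricting to a basis of the column space loses no information about which assignments to the full variable set are distinguishable.

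The polynomial-time claim is immediate: Rule~\ref{rulerank} requires computing the rank of $A$ and a maximal independent set of columns (Gaussian elimination over $\mathbb{F}_2$, $\mathrm{poly}(n,m)$ time), and Rule~\ref{rule1} requires grouping equations by left-hand side and merging, also $\mathrm{poly}(n,m)$.

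The main obstacle I anticipate is getting the bookkeeping for Rule~\ref{rulerank} exactly right: one must be careful that deleting non-basis variables really is excess-preserving and not merely excess-non-decreasing, and the cleanest route is probably to observe that the rule can be implemented as a change of variables (an invertible $\mathbb{F}_2$-linear substitution) followed by discarding variables that appear in no equation, each of which step obviously preserves the excess function up to relabeling; I would spell out that this is what "linearly independent columns $a^{i_1},\dots,a^{i_t}$" buys us. Rule~\ref{rule1} is routine by the signed-contribution computation above.
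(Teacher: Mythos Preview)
Your proposal is correct and lands on essentially the same argument as the paper, though you reach it by a more circuitous route. For Rule~\ref{rule1} your signed-contribution computation is exactly right; the paper does not spell this rule out at all, treating it as evident. For Rule~\ref{rulerank}, the paper gives the one-line version of your change-of-variables idea: writing a non-basis column as $a^j=\sum_{i\in I'}a^i$, it observes that if $z^0_j=1$ one may flip $z^0_i$ for every $i\in I'\cup\{j\}$ and satisfy exactly the same equations, so one may assume $z_j=0$ and delete it. This is precisely the inverse of the invertible $\mathbb{F}_2$-linear substitution you propose in your final paragraph, just stated at the level of a single assignment rather than as a global coordinate change; your framing ``invertible substitution, then drop variables appearing in no equation'' is arguably cleaner for seeing that the excess \emph{function} (not just its maximum) is preserved.

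The earlier parts of your Rule~\ref{rulerank} discussion are dead ends you yourself flag as such (plain restriction does not work, and the ``multiset of triples is unchanged'' remark is not literally true since the left-hand sides do change), so in the write-up you should go straight to the substitution argument and omit the exploratory detours.
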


To see the validity of Rule~\ref{rulerank}, consider an independent set $I$ of columns of $A$ of cardinality ${\rm rank} A$ and
a column $a^{j}\not\in I$. Observe that $a^j=\sum_{i\in I'}a^i,$ where $I'\subseteq I$. 
Consider an assignment $z=z^0$.
If $z^0_j=1$ then for each $i\in I'\cup \{j\}$ replace $z^0_i$ by $z^0_i+1$.
The new assignment satisfies exactly the same equations as the initial assignment. Thus, we may assume that $z_j = 0$ and remove $z_j$ from the system.
For a different proof, see \cite{GutinKimSzeiderYeo09a}.
If we cannot change a weighted system $Az=b$ using Rules~\ref{rulerank} and \ref{rule1}, we call it {\em irreducible}.

\vspace{3mm}

Consider the following algorithm that tries to maximize the total weight of satisfied equations of $Az=b$.
We assume that, in the beginning, no equation or variable in $Az=b$ is marked.

\begin{center}
\fbox{~\begin{minipage}{11cm}
\textsc{Algorithm $\cal H$}

\smallskip
While the system $Az=b$ is nonempty do the following:

\begin{enumerate}

					                   \item Choose an arbitrary equation  $\sum_{i \in S} z_i =b$ and mark $z_l$, where $l= \min\{i:\ i\in S\}.$
                                       \item Mark this equation and delete it from the system.
                                       \item Replace every equation $\sum_{i \in S'} z_i =b'$ in the system containing $z_l$ by
                                      $\sum_{i \in S} z_i + \sum_{i \in S'} z_i = b+b'$.
                                       \item Apply Reduction Rule \ref{rule1} to the system.
                                     \end{enumerate}
\smallskip
\end{minipage}~}
\end{center}

Note that algorithm ${\cal H}$ replaces $Az=b$ with an \emph{equivalent}
system under the assumption that the marked equations are satisfied; that
is, for every assignment of values to the variables $z_1, \dots, z_n$ that
satisfies the marked equations, both systems have the same excess.

The {\em maximum ${\cal H}$-excess} of $Az=b$ is the maximum possible total weight of equations marked by ${\cal H}$ for $Az=b$
taken over all possible choices in Step 1 of $\cal H$.

\begin{lemma}\label{lemH}
The maximum excess of $Az=b$ equals its maximum ${\cal H}$-excess.
\end{lemma}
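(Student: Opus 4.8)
The plan is to show the two quantities bound each other. One direction is essentially built into the description of Algorithm $\cal H$: I would argue that the maximum ${\cal H}$-excess is at most the maximum excess. For this, fix a run of $\cal H$ and let $M$ be the set of equations it marks, with total weight $W_M$. The comment after the algorithm says that each iteration replaces the current system with one that is equivalent under the assumption that the marked equations are satisfied, and this is easy to verify step by step: adding the marked equation $\sum_{i\in S}z_i=b$ to another equation $\sum_{i\in S'}z_i=b'$ does not change the latter's truth value once we assume the marked one holds, and Rule~\ref{rule1} preserves excess by Lemma~\ref{mGEn}. Since the variable $z_l$ marked in Step~1 occurs in no equation remaining after Step~3, we can process the marked equations in reverse order of marking: assign values to the unmarked variables arbitrarily (say all $0$), then set each $z_l$, from the last marked equation back to the first, so that its equation is satisfied — this is always possible because at the moment $z_l$ was marked it appeared in that equation, and it appears in no equation marked earlier. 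With all marked equations satisfied, the final empty system contributes excess $0$, so by the equivalence the excess of the original system under this assignment equals exactly $W_M$. Hence the maximum excess is at least the maximum ${\cal H}$-excess.

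For the reverse inequality, I would take an optimal assignment $z^*$ for $Az=b$, achieving the maximum excess, and run $\cal H$ while making the choices in Step~1 adaptively so that every marked equation is satisfied by (a suitable modification of) $z^*$. The idea: maintain an assignment that satisfies all equations already marked; in Step~1, among the equations currently in the system, an equation is ``good'' if it is satisfied by the current assignment. As long as the current system is nonempty it contains at least one equation; if the chosen equation is falsified, we may flip the value of its marked variable $z_l$ to satisfy it — but we must ensure this does not disturb previously marked equations. Since $z_l$ was eliminated from all surviving equations at the time of its introduction, and the equations marked earlier involve only variables eliminated still earlier, flipping $z_l$ is safe for them; the subtlety is that flipping $z_l$ may change the excess contributed by the \emph{current} remaining system, so one must track the excess carefully. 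The clean way to organize this is: the excess of $Az=b$ under $z^*$ decomposes, via the equivalence maintained by $\cal H$, into (weight of satisfied marked equations) $-$ (weight of falsified marked equations) $+$ (excess of the final empty system under $z^*$, which is $0$). So the maximum excess equals, for the run induced by $z^*$, (weight of satisfied marked) $-$ (weight of falsified marked) $\le$ (weight of all marked) $\le$ maximum ${\cal H}$-excess — provided we have arranged the Step~1 choices so that $z^*$ satisfies all marked equations, making the falsified-marked weight zero. Arranging this is exactly the adaptive choice: when about to mark an equation, first adjust $z^*$ on the about-to-be-marked variable (if needed) to satisfy it, which is harmless for earlier marked equations by the elimination property.

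The main obstacle is the bookkeeping in the reverse direction: making precise that we can always steer the choices of $\cal H$ so that the optimal assignment (as modified online) satisfies every marked equation, and that these online modifications never retroactively falsify an earlier marked equation nor change the running ``excess decomposition'' in a way we cannot account for. The key structural fact that makes it go through is that the variable $z_l$ marked when an equation is processed is deleted from every surviving equation in Step~3, so distinct marked equations ``own'' distinct variables and the marked variables can be set independently, in the reverse order of marking, to satisfy their equations. Everything else — that Steps~2--4 preserve the excess under the assumption that marked equations hold — is immediate from Lemma~\ref{mGEn} and the definition of the update in Step~3.
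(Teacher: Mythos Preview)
Your first direction (maximum excess $\ge$ maximum ${\cal H}$-excess) is correct and matches the paper: back-substitute on the marked variables in reverse marking order to satisfy all marked equations, then invoke the equivalence to conclude the excess of $Az=b$ under that assignment is exactly the total marked weight.

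Your reverse direction has a genuine gap. The paper's argument here is much simpler than yours: keep the optimal assignment $z^*$ \emph{fixed}, and in Step~1 always choose an equation satisfied by $z^*$. Since $z^*$ then satisfies every marked equation, the equivalence gives $t=(\text{total marked weight})+0$, so the ${\cal H}$-excess of this run is exactly $t$. (Why is a satisfied equation always available? If at some point the residual system $S$ is nonempty with every equation falsified by $z^*$, then ${\rm excess}(S,z^*)<0$, so the marked-so-far weight already exceeds $t$; continuing ${\cal H}$ arbitrarily would give ${\cal H}$-excess $>t$, contradicting the first direction.)

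Your online-flipping plan does not work. You propose to flip $z_{l}$ in $z^*$ (where $z_l$ is the variable marked in the \emph{current} iteration) and claim this is harmless for equations marked earlier, because ``the equations marked earlier involve only variables eliminated still earlier.'' That is false. The equation $e_i$ marked at iteration $i$ is free of $z_{l_1},\dots,z_{l_{i-1}}$, but it may well contain $z_{l_j}$ for $j>i$. Concretely: mark $z_1+z_2=0$ first (so $l_1=1$), then at a later iteration mark an equation with $l_2=2$; flipping $z_2$ now flips the truth value of the first marked equation. So a forward-in-time flip at $z_{l_j}$ can retroactively falsify some $e_i$ with $i<j$, and then the excess decomposition you write down---which is valid only when $z^*$ satisfies all marked equations---no longer applies. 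The reverse-order back-substitution you cite at the end is correct, but it belongs to the \emph{first} direction; it does not by itself show the marked weight is at least $t$.
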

\begin{proof}
We first prove that the maximum excess of $Az=b$ is not smaller than its maximum ${\cal H}$-excess.

Let $K$ be the set of equations marked by $\cal H$.
A method first described in \cite{CGJ} can find an assignment of values to the variables such that the equations in $K$ are satisfied and, in the remainder of the system, the total weight of satisfied equations is not smaller than the total weight of falsified equations.

For the sake of completeness, we repeat the description here. By construction, for any assignment that satisfies all the marked equations, exactly half of the non-marked equations are satisfied. Therefore it suffices to find an assignment to the variables such that all marked equations are satisfied. This is possible if we
find an assignment that satisfies the last marked equation, then find an assignment satisfying the equation marked before the last, etc. Indeed,
the equation marked before the last contains
a (marked) variable $z_l$ not appearing in the last equation, etc. This proves the first part of our lemma.

Now we prove that the maximum ${\cal H}$-excess of $Az=b$ is not smaller than its maximum excess. Let $z = (z_1,\ldots ,z_n)$ be an assignment that achieves the maximum excess, $t$.
Observe that if at each iteration of $\cal H$ we mark an equation that is satisfied by $z$, then $\cal H$ will mark equations of total weight $t$.
\end{proof}

\begin{remark}\label{rem1}
{\rm It follows from Lemma \ref{lemH} that the maximum excess of a (nonempty) irreducible system $Az=b$ with smallest weight $w_{\min}$ is at least $w_{\min}$. If all weights are integral, then the maximum excess of  $Az=b$ is at least 1.}
\end{remark}

Clearly, the total weight of equations marked by $\cal H$ depends on the choice of equations to mark in Step 1.
Below we consider one such choice based on the following theorem. The theorem allows us to find a set of equations such that we can mark each equation in the set in successive iterations of $\cal H$. This means we can run $\cal H$ a guaranteed number of times, which we can use to get a lower bound on the ${\cal H}$-excess.

\begin{theorem}\label{thmKcon}
Let $M$ be a set in $\mathbb{F}^n_2$ such that $M$ contains a basis of $\mathbb{F}^n_2$, the
zero vector is in $M$ and $|M|<2^n$. If $k$ is a positive integer and $k+1\le |M|\le 2^{n/k}$
then, in time $|M|^{O(1)}$, we can find a subset $K$ of $M$ of $k+1$ vectors
such that no sum of two or more vectors of $K$ is in $M$.
\end{theorem}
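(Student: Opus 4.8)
The plan is to avoid any quotient/induction argument and instead control, by a direct covering argument, the largest size $d^*$ of a subset $\{v_1,\dots,v_s\}\subseteq M$ that is \emph{linearly independent} and satisfies $\mathrm{span}(v_1,\dots,v_s)\cap M=\{0,v_1,\dots,v_s\}$; call such a set \emph{$M$-free}. This notion is exactly what we need, since if $K=\{v_1,\dots,v_{k+1}\}$ is $M$-free then any sum of two or more of its members is a nonzero vector of $\mathrm{span}(K)$ that is different from each $v_i$ (by linear independence), hence lies outside $\mathrm{span}(K)\cap M$ and therefore outside $M$. So it suffices to prove $d^*\ge k+1$ and to produce such a $K$; note also that any subset of an $M$-free set is $M$-free.

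The key step is the covering lemma: \emph{every $x\in\mathbb{F}_2^n$ is a sum of at most $d^*$ elements of $M$.} To see this, fix $x$ and take a representation $x=\sum_{i\in[s]}m_i$ with all $m_i\in M$ (one exists since $M$ spans $\mathbb{F}_2^n$) with $s$ minimum. First, $m_1,\dots,m_s$ are linearly independent: if not, some nonempty subset sums to $0$, and since its sum is $0\ne x$ when $x\ne 0$ (the case $x=0$ gives $s=0$) this subset is proper, so deleting it would yield a shorter representation. Second, $\mathrm{span}(m_1,\dots,m_s)\cap M=\{0,m_1,\dots,m_s\}$: any further $m'\in M$ in this span equals $\sum_{i\in B}m_i$ for a unique $B$ with $|B|\ge 2$, and replacing $\{m_i:i\in B\}$ by the single element $m'$ shortens the representation. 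Hence $\{m_1,\dots,m_s\}$ is $M$-free, so $s\le d^*$.

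Granting the lemma, $\mathbb{F}_2^n$ is covered by the sums of exactly $d^*$ elements of $M$ (pad a shortest representation with the zero vector), so $2^n\le\binom{|M|+d^*-1}{d^*}$, the number of size-$d^*$ multisets over $M$. If $d^*=1$ the lemma would force $\mathbb{F}_2^n=\{0\}\cup M=M$, contradicting $|M|<2^n$; so $d^*\ge 2$, and since also $|M|\ge 2$ the number of size-$d^*$ multisets over $M$ is strictly smaller than $|M|^{d^*}$ (not every $d^*$-term sequence over $M$ is sorted). Thus $2^n<|M|^{d^*}$, i.e. $n<d^*\log_2|M|$. Combined with $|M|\le 2^{n/k}$, i.e. $\log_2|M|\le n/k$, this yields $d^*>k$, hence $d^*\ge k+1$; discarding all but $k+1$ vectors of an $M$-free set of size $d^*$ gives the required $K$.

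For the algorithmic claim, the representation-shortening used in the covering lemma is clearly polynomial, and one obtains $K$ in time $|M|^{O(1)}$ by the analogous iterative procedure: start from a linearly independent spanning subset of $M$ and repeatedly shorten (replace a set of at least two of the current vectors whose sum lies in $M$ by that sum), keeping the retained set linearly independent; using the bound $d^*\ge k+1$ one steers the shortenings so that the retained set never drops below size $k+1$. The main difficulty of the whole argument is the covering lemma — the observation that a \emph{shortest} $M$-representation automatically has an $M$-free support — together with the elementary counting that turns it into $d^*\log_2|M|>n$; the algorithmic part requires only routine bookkeeping to keep the retained set large enough.
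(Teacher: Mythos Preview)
Your covering-lemma argument for \emph{existence} is correct and clean: the observation that a shortest $M$-representation of any vector has $M$-free support, combined with the multiset count $2^n\le\binom{|M|+d^*-1}{d^*}<|M|^{d^*}$, gives $d^*\ge k+1$ directly from $|M|\le 2^{n/k}$. This is close in spirit to the non-constructive argument the paper records in the Remark following the theorem (there one picks a single vector whose shortest $M$-representation has length exceeding $k$ and reads $K$ off that representation); your formulation via the maximum $M$-free size $d^*$ is arguably tidier and hits the stated bound exactly.

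The genuine gap is in the algorithmic part, which the theorem explicitly requires (``in time $|M|^{O(1)}$, we can find\dots''). Your procedure --- start from a basis contained in $M$ and repeatedly replace a subset of size $\ge 2$ of the current vectors by their sum whenever that sum lies in $M$ --- cannot in general be ``steered'' to stay above size $k+1$. Take $n=10$, $k=2$, and $M=\{0,e_1,\dots,e_{10},j\}$ with $j=e_1+\cdots+e_{10}$: here $|M|=12\le 2^{5}$, and $\{e_1,e_2,e_3\}$ is an $M$-free set of the required size, yet from the basis $\{e_1,\dots,e_{10}\}$ the \emph{only} subset of size $\ge 2$ whose sum lies in $M$ is the full set (summing to $j$), so the sole available shortening collapses the retained set to $\{j\}$ of size $1$. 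The same happens from any other basis inside $M$. Knowing abstractly that $d^*\ge k+1$ does not tell you which moves to make, and brute-force search over $(k{+}1)$-subsets is not $|M|^{O(1)}$ since $k$ is not bounded. The paper's constructive proof handles exactly this obstacle by a greedy-plus-quotient scheme: greedily grow an $M$-free set; if stuck at size $l\le k$, pass to $\mathbb{F}_2^n$ modulo the span of the current set, verify the hypotheses are inherited with $n$ replaced by $n-l$ and $|M|$ at least halved, and iterate. That quotient recursion is precisely what you set out to avoid, and something of comparable strength appears necessary for the constructive claim; dismissing it as ``routine bookkeeping'' is not justified.
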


\begin{proof}
We first consider the case when $k = 1$. Since $|M|<2^n$ and the zero vector is in $M$, there is a non-zero vector
$v\not\in M$. Since $M$ contains a basis for $\mathbb{F}^n_2$, $v$ can be written as a sum
of vectors in $M$ and consider such a sum with the minimum number of summands: $v=u_1+\cdots + u_{\ell}$, $\ell \ge 2$.  Since $u_1+u_2\not\in M$, we may set $K=\{u_1,u_2\}.$
We can find such a set $K$ in polynomial time by looking at every pair in $M\times M$.

We now assume that $k>1$. Since $k+1\le |M|\le 2^{n/k}$ we have $n\ge k+1.$

We proceed with a greedy algorithm that tries to find $K$. Suppose we have a set $L = \{a_1,\ldots ,a_l\}$ of vectors in $M$, $l \le k$,  such that no sum of two or more elements of $L$ is in $M$.  We can extend this set to a basis, so $a_1=(1,0,0,\ldots ,0),\ a_2=(0,1,0,\ldots ,0)$ and so on. For every $a \in M \backslash L$ we check whether $M \backslash \{a_1,\ldots, a_l, a\}$ has an element that agrees with $a$ in all co-ordinates $l+1, \ldots, n$. If no such element exists, then we add $a$ to the set $L$, as no element in $M$ can be expressed as a sum of $a$ and a subset of $L$.

If our greedy algorithm finds a set $L$ of size at least $k+1$, we are done and $L$ is our set $K$. Otherwise, we have stopped at $l \le k$. In this case,
we do the next iteration as follows. Recall that $L$ is part of a basis of $M$ such that $a_1 = (1,0,0, \ldots, 0),\ a_2 = (0,1,0, \dots , 0), \ldots .$
We  create a new set $M'$ in $\mathbb{F}^{n'}_2$, where $n' = n-l$. We do this\footnote{For the reader familiar with vector space terminology: $\mathbb{F}^{n'}_2$ is $\mathbb{F}^n_2$ modulo ${\rm span}(L)$, the subspace of $\mathbb{F}^n_2$
spanned by $L$, and $M'$ is the image of $M$ in $\mathbb{F}^{n'}_2$.} by removing the first $l$ co-ordinates from $M$, and then identifying together any vectors that agree in the remaining $n'$ co-ordinates. We are in effect identifying together any vectors that only differ by a sum of some elements in $L$. It follows that every element of $M'$ was created by identifying together at least two elements of $M$, since otherwise we would have had an element in $M\backslash L$ that should have been added to $L$ by our greedy algorithm. Therefore it follows that $|M'| \le |M|/2 \le 2^{n/k -1}$.
From this inequality and the fact that $n' \ge n - k$, we get that $|M'| \le 2^{n'/k}.$
It also follows by construction of $M'$ that $M'$ has a basis for $\mathbb{F}^{n'}_2$, and that the zero vector is in $M'$. (Thus, we have $|M'|\ge n'+1$.)
If $n'\ge k+1$ we complete this iteration by running the algorithm on the set $M'$ as in the first iteration.
Otherwise ($n'\le k$), the algorithm stops.

Since each iteration of the algorithm decreases $n'$, the algorithm terminates.
Now we prove that at some iteration, the algorithm will actually find a set $K$ of $k+1$ vectors.
To show this it suffices to prove that we will never reach the point when $n' \le k$.
Suppose this is not true and we obtained $n' \le k$.
Observe that $n'\geq 1$
(before that we had $n'\geq k+1$  and we decreased $n'$ by at most $k$) and $|M'|\geq n'+1$. Since $|M'|\leq 2^{n'/k}$, we have $n'+1\leq
2^{n'/k}$, which is impossible due to $n'\leq k$ unless $n'=1$ and $k=1$, a
contradiction with the assumption that $k>1$.

It is easy to check that the running time of the algorithm is polynomial in $|M|.$
\end{proof}

\begin{remark} {\rm It is much easier to prove a non-constructive version of the above result. In fact we can give a non-constructive proof that $k+1\le |M|\le 2^{n/k}$ can be replaced by $2k<|M|<2^{n/k}((k-1)!)^{1/k}$. We will extend our proof above for the case $k=1$. We may assume that $k\ge 2$.
Observe that the number of vectors of $\mathbb{F}^n_2$ that can be expressed as the sum of at most $k$ vectors of $M$ is at most
$$\binom{|M|}{k} + \binom{|M|}{k-1} + \dots + \binom{|M|}{1} + 1 \le {|M|^k}/{(k-1)!} \text{ for }|M|> 2k.$$

Since $|M|<2^{n/k}((k-1)!)^{1/k}$ we have $|\mathbb{F}^n_2|>{|M|^k}/{(k-1)!}$ and, thus, at least for one vector $a$ of $\mathbb{F}^n_2$ we have $a=m_1+\cdots + m_{\ell}$, where $\ell$ is minimum and $\ell>k.$ Note that, by the minimality of $\ell$, no sum of two or more summands of the sum for $a$ is in $M$ and all summands are distinct. Thus, we can set $K=\{m_1,\ldots ,m_{k+1}\}.$
}
\end{remark}


\begin{theorem}\label{lemYes}
Let $Az=b$ be an irreducible system, let $w_{\min}$ be the minimum weight of an equation in $Az=b$, and let $k\ge 2.$ If $k\le m\le 2^{n/(k-1)}-2$, then the maximum excess of $Az=b$ is at least $k\cdot w_{\min}$. Moreover, we can find an assignment that achieves an excess of at least $k\cdot w_{\min}$ in time $m^{O(1)}$.
\end{theorem}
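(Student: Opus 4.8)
The plan is to feed the greedy selection of Theorem~\ref{thmKcon} into the algorithm $\mathcal H$ and then invoke Lemma~\ref{lemH}. First I would unpack irreducibility: because Rule~\ref{rule1} is inapplicable, the $m$ left-hand sides of $Az=b$ are pairwise distinct vectors of $\mathbb{F}_2^n$, and because Rule~\ref{rulerank} is inapplicable, $n={\rm rank}\,A$, so these $m$ rows of $A$ span $\mathbb{F}_2^n$ (in particular $n={\rm rank}\,A\le m$, which keeps all running times polynomial in $m$). Let $M_0$ be the set of left-hand sides and set $M=M_0\cup\{\mathbf 0\}$, so $M$ contains a basis of $\mathbb{F}_2^n$, contains $\mathbf 0$, and $m\le|M|\le m+1$. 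From $k\le m\le 2^{n/(k-1)}-2$ we obtain $k\le|M|\le 2^{n/(k-1)}-1$, and since $k\ge2$ also $|M|<2^n$; hence Theorem~\ref{thmKcon}, applied with parameter $k-1$ in place of its $k$, produces in time $m^{O(1)}$ a set $K\subseteq M$ with $|K|=k$ such that no sum of two or more distinct vectors of $K$ lies in $M$. Note $\mathbf 0\notin K$ (else $\mathbf 0+v=v\in M$ for any other $v\in K$), so $K=\{v_1,\dots,v_k\}$ with each $v_j\ne\mathbf 0$ the left-hand side of an equation $e_j$ of $Az=b$ of weight $w_j\ge w_{\min}$.

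Next I would run $\mathcal H$ making the following choice in Step~1: at iteration $j$ ($1\le j\le k$) mark the equation currently in the system whose left-hand side lies in the coset $v_j+{\rm span}(v_1,\dots,v_{j-1})$. I would prove by induction on $j$ that this equation is unique, can legitimately be marked, and still has weight $w_j$. The mechanism is that the left-hand sides marked so far are in echelon form with respect to their pivot variables, so after $j$ iterations every current left-hand side lies in some coset $v+{\rm span}(v_1,\dots,v_j)$ with $v\in M_0$, and any two equations in a common such coset would already have been identified, and merged, by Rule~\ref{rule1} in Step~4; thus each coset meets the system in at most one equation. If for $i>j$ the coset of $v_i$ contained an equation descended from an original one with left-hand side $v\ne v_i$, then $v=v_i+\sum_{s\in C}v_s$ for some nonempty $C\subseteq\{1,\dots,j\}$ would be a sum of two or more distinct vectors of $K$ lying in $M_0\subseteq M$ — impossible; and $v_i+\sum_{s\in T}v_s$ with $T\subseteq\{1,\dots,j\}$ is a sum of one or more distinct vectors of $K$, hence never $\mathbf 0$. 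So the equation in the coset of $v_i$ is never deleted and never merged with anything by Rule~\ref{rule1}, so it survives with weight $w_i$ until it is marked at iteration $i$; and since a pivoted variable never reappears in the system, the variable marked at iteration $j$ is fresh, making the marking legitimate.

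This run of $\mathcal H$ thus marks $k$ equations of total weight $\sum_{i=1}^kw_i\ge k\,w_{\min}$, so the maximum $\mathcal H$-excess of $Az=b$, and hence by Lemma~\ref{lemH} its maximum excess, is at least $k\,w_{\min}$. For the algorithmic claim I would use the back-substitution from the proof of Lemma~\ref{lemH}: the marked equations carry distinct marked variables in the order they were marked, so satisfying them one at a time from the last to the first yields an assignment whose excess on $Az=b$ is at least the total weight of the marked equations, i.e.\ at least $k\,w_{\min}$. Building $M$, calling Theorem~\ref{thmKcon}, running $\mathcal H$, and the back-substitution are all polynomial in $m$ (recall $n\le m$).

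The step I expect to be the main obstacle is the inductive claim of the second paragraph: one must verify carefully that, despite the left-hand-side modifications and the repeated use of Rule~\ref{rule1} inside $\mathcal H$, the equations we plan to mark are never deleted and never lose weight. The defining property of $K$ is exactly what is needed, but exploiting it requires a clean description of the current left-hand sides as cosets modulo the span of the already-marked left-hand sides, together with the fact that two equations lying in one coset cannot coexist in an irreducible system.
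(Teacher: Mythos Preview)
Your proposal is correct and follows essentially the same approach as the paper: build $M$ from the left-hand sides plus $\mathbf 0$, apply Theorem~\ref{thmKcon} with parameter $k-1$ to obtain $K$ of size $k$, run $\mathcal H$ marking the $K$-equations, and argue via the ``no sum of two or more vectors of $K$ lies in $M$'' property that these equations are never merged or deleted, so their weights are preserved; then invoke Lemma~\ref{lemH}. Your coset description of the current left-hand sides and the explicit checks that $\mathbf 0\notin K$ and that the tracked equation never becomes trivial are details the paper leaves implicit, but the argument is the same.
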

\begin{proof}
Consider a set $M$ of vectors in $\mathbb{F}_2^n$ corresponding to equations in $Az=b$ as follows: for each $\sum_{i \in S} z_i =b_S$ in $Az=b$, the vector $v=(v_1,\ldots ,v_{n})\in M$, where $v_i=1$ if $i\in S$ and $v_i=0$, otherwise.
Add the zero vector to $M$. As $Az=b$ is reduced by Rule \ref{rulerank} and $k\le m\le 2^{n/(k-1)}-2$, we have that $M$ contains a basis for $\mathbb{F}^n_2$ and $k\le |M| \le 2^{n/(k-1)}-1$. Therefore, using Theorem \ref{thmKcon} we can find a set $K$ of $k$ vectors such that no sum of two or more vectors in $K$ belongs to $M.$

Now run Algorithm $\cal H$ choosing at each Step 1 an equation of $Az=b$  corresponding to a member of $K$, then equations picked at random until the algorithm terminates.
Algorithm $\cal H$ will run at least $k$ iterations as no equation corresponding to a vector in $K$ will be deleted before it has been marked. Indeed, suppose that this is not true. Then there are vectors $w\in K$ and $v\in M$ and a pair of nonintersecting subsets $K'$ and $K''$ of $K\setminus \{v,w\}$ such that $w+\sum_{u\in K'}u=v+\sum_{u\in K''}u$. Thus,
$v=w+\sum_{u\in K'\cup K''}u$, a contradiction with the definition of $K.$

In fact, the above argument shows that no equation of $Az=b$  corresponding to a member of $K$ will change its weight during the first $k$ iterations of $\cal H$.
Thus, by Lemma \ref{lemH}, the maximum excess of $Az=b$ is at least $k\cdot w_{\min}$.
It remains to observe that we can once again use the algorithm given in the proof of Lemma \ref{lemH} to find an assignment that gives an excess of at least $k\cdot w_{\min}$.
\end{proof}

We now provide a useful association between  weighted systems of linear equations on $\mathbb{F}^n_2$ and Fourier expansions of functions $f:\ \{-1,+1\} \rightarrow \mathbb{R}$. Let us rewrite (\ref{eq1}), the Fourier expansion of such a function, as
\begin{equation}\label{eq2}f(x)=\hat{f}(\emptyset)+\sum_{S\in {\cal F}}c_S\prod_{i\in S}x_i,\end{equation} where ${\cal F}=\{\emptyset\neq S\subseteq [n]:\ c_S\neq 0\}.$

Now associate the polynomial $\sum_{S\in {\cal F}}c_S\prod_{i\in S}x_i$ in (\ref{eq2}) with a weighted system $Az=b$ of linear equations on $\mathbb{F}^n_2$: for each $S\in {\cal F}$, we have an equation $\sum_{i\in S}z_i=b_S$ with weight $|c_S|$, where $b_S=0$ if $c_S$ is positive and $b_S=1$, otherwise. Conversely, suppose we have a system $Az=b$ of linear equations on $\mathbb{F}^n_2$ in which each equation $\sum_{i \in S} z_i =b_S$ is assigned a weight $w_S>0$ and no pair of equations have the same left-hand side. This system can be associated with the polynomial $\sum_{S\in {\cal F}}c_S\prod_{i\in S}x_i,$ where  $c_S=w_S$, if $b_S=0$,  and $c_S=-w_S$, otherwise. The above associations provide a bijection between Fourier expansions of functions $f:\ \{-1,+1\} \rightarrow \mathbb{R}$ with $\hat{f}(\emptyset)=0$ and weighted systems of linear equations on $\mathbb{F}^n_2$. This bijection is of
interest due to the following:

\begin{proposition}\label{lemFS}
An assignment $z^{(0)}=(z^{(0)}_1,\ldots ,z^{(0)}_n)$ of values to the variables of $Az=b$ maximizes the total weight of satisfied equations of $Az=b$ if and only if
$x^{(0)}=((-1)^{z^{(0)}_1},\ldots ,(-1)^{z^{(0)}_n})$ maximizes $f(x).$ Moreover, $\max_{x\in \{-1,+1\}^n}f(x)- \hat{f}(\emptyset)$ equals the maximum excess of $Az=b.$
\end{proposition}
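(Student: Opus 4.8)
The plan is to unwind the definitions on both sides and verify that satisfying an equation corresponds exactly to the sign of the associated monomial being $+1$. Fix an assignment $z^{(0)}\in\mathbb{F}_2^n$ and set $x^{(0)}_i=(-1)^{z^{(0)}_i}$, so that $x^{(0)}\in\{-1,+1\}^n$ and the map $z^{(0)}\mapsto x^{(0)}$ is a bijection. For a set $S\in\mathcal{F}$, note that $\prod_{i\in S}x^{(0)}_i=(-1)^{\sum_{i\in S}z^{(0)}_i}$, which equals $+1$ precisely when $\sum_{i\in S}z^{(0)}_i=0$ in $\mathbb{F}_2$ and $-1$ otherwise. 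Recalling the association $c_S=w_S$ when $b_S=0$ and $c_S=-w_S$ when $b_S=1$, one checks in both cases that $c_S\prod_{i\in S}x^{(0)}_i=w_S$ if the equation $\sum_{i\in S}z_i=b_S$ is satisfied by $z^{(0)}$, and $c_S\prod_{i\in S}x^{(0)}_i=-w_S$ if it is falsified.

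Summing this identity over all $S\in\mathcal{F}$ and using (\ref{eq2}) gives
\begin{equation}\label{eq:key}
f(x^{(0)})-\hat{f}(\emptyset)=\sum_{S\in\mathcal{F}}c_S\prod_{i\in S}x^{(0)}_i=\sum_{S:\ \text{sat}}w_S-\sum_{S:\ \text{fals}}w_S,
\end{equation}
that is, $f(x^{(0)})-\hat{f}(\emptyset)$ equals the excess of the assignment $z^{(0)}$ for $Az=b$. Since $\hat{f}(\emptyset)$ is a constant independent of the assignment, $f(x^{(0)})$ is maximized over $x^{(0)}\in\{-1,+1\}^n$ exactly when the excess of $z^{(0)}$ is maximized, and because the excess equals (total weight of satisfied equations) minus (total weight of falsified equations) $=2\cdot(\text{weight of satisfied})-W$ with $W$ the fixed total weight, maximizing the excess is the same as maximizing the total weight of satisfied equations. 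This proves the first claim. Taking the maximum of (\ref{eq:key}) over all assignments then yields $\max_{x\in\{-1,+1\}^n}f(x)-\hat{f}(\emptyset)$ equal to the maximum excess of $Az=b$, which is the second claim.

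I do not expect any genuine obstacle here; the statement is essentially a bookkeeping identity. The only point requiring a little care is the sign convention: one must check both the $b_S=0$ and $b_S=1$ cases of the definition of $c_S$ against both the satisfied and falsified cases, to confirm that $c_S\prod_{i\in S}x^{(0)}_i$ always equals $+w_S$ on satisfaction and $-w_S$ on falsification. Once that four-way check is done, the rest is immediate from linearity of the sum in (\ref{eq2}) and the fact that the map $z\mapsto x$ is a bijection between $\mathbb{F}_2^n$ and $\{-1,+1\}^n$.
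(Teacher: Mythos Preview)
Your proof is correct and follows essentially the same approach as the paper's own proof. The paper's proof is a single sentence pointing to the key observation that $\sum_{i\in S}z_i=0$ is satisfied if and only if $\prod_{i\in S}x_i>0$ when $x_i=(-1)^{z_i}$; you have simply unpacked this observation into the full four-case check and written out the summation explicitly.
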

\begin{proof}
The claims of this lemma easily follow from the fact
that an equation $\sum_{i\in S}z_i=0$ is satisfied if and only if $\prod_{i\in S}x_i>0,$ where $x_i=(-1)^{z_i}.$
\end{proof}

\section{Corollaries} \label{LBsec}

This section contains a collection of corollaries of Theorem \ref{lemYes} establishing parameterized complexity of special cases of {\sc Max Lin AA}, of {\sc Max Exact $r$-SAT}, and of a
wide class of constraint satisfaction problems. In addition, we will prove that {\sc Max Lin AA} is in X[P] and obtain a sharp lower bound on the maximum of a pseudo-boolean function.

\subsection{Parameterized Complexity of Max Lin AA}

\begin{corollary}\label{fptT}
Let $p(n)$ be a fixed function such that $p(n)=o(n)$. If $m\le 2^{p(n)}$ then {\MLT} is fixed-parameter tractable.
Moreover, a satisfying assignment can be found in time $g(k)m^{O(1)}$ for some computable function $g$.
\end{corollary}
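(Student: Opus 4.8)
We work under the standing assumptions of the paper (see Section \ref{CSPsec}), namely that $n={\rm rank}\,A$ and that no two equations of $Az=b$ share a left-hand side; then $Az=b$ is irreducible, and, since the weights are positive integers, $w_{\min}\ge 1$. The plan is to apply Theorem \ref{lemYes} whenever its hypotheses hold, and to argue that in every remaining instance the number of variables $n$ is bounded by a function of $k$ alone, so that those instances can be settled by exhaustive search over all assignments.

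First I would consider the case $k\le m\le 2^{n/(k-1)}-2$. Here Theorem \ref{lemYes} applies directly and yields maximum excess at least $k\cdot w_{\min}\ge k$, so $(Az=b,k)$ is a yes-instance; moreover the constructive part of that theorem produces, in time $m^{O(1)}$, an assignment of excess at least $k$. It therefore remains to treat the complementary situation, which is: either $m\le k-1$, or else $m\ge k$ and $m>2^{n/(k-1)}-2$.

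In the first sub-case $n={\rm rank}\,A\le m\le k-1$, so $n$ is already bounded in terms of $k$. In the second sub-case $2^{n/(k-1)}<m+2\le 2^{p(n)}+2\le 2^{p(n)+1}$, where the last inequality is valid because $2^{p(n)}\ge m\ge k\ge 2$ forces $p(n)\ge 1$; hence $n<(k-1)(p(n)+1)$. Since $p(n)=o(n)$, the inequality $(k-1)(p(n)+1)<n$ holds for all $n$ exceeding some bound $h(k)$ depending only on $k$, so $n<(k-1)(p(n)+1)$ forces $n\le h(k)$. Thus in the complementary situation $n\le\max\{k-1,h(k)\}=:h'(k)$, and I would finish by evaluating the excess of each of the at most $2^{h'(k)}$ assignments to $z_1,\ldots,z_n$, which takes time $2^{h'(k)}\cdot m^{O(1)}$, returning a maximizing assignment precisely when its excess is at least $k$. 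Over all cases the running time is $g(k)\,m^{O(1)}$ for a suitable $g$ (computable as long as $p$ is given effectively, which is the standard assumption).

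The step I expect to be the main obstacle is the second sub-case: Theorem \ref{lemYes} fails exactly when the system has too many equations relative to its number of variables, and the content of the argument is that, under $m\le 2^{p(n)}$ with $p(n)=o(n)$, this can happen only when $n$ is small in terms of $k$. In other words, the sublinearity of $p$ is precisely what converts ``Theorem \ref{lemYes} does not apply'' into a bound on the number of variables, after which the instance becomes trivial; the rest is routine bookkeeping.
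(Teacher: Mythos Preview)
Your proof is correct and follows essentially the same approach as the paper: apply Theorem~\ref{lemYes} when its hypotheses hold, and otherwise use $p(n)=o(n)$ to bound $n$ by a function of $k$ and finish by exhaustive search. The paper's version is terser (it uses the clean dichotomy $p(n)\le n/k$ versus $p(n)>n/k$, after noting $m\le 2^{n/k}$ implies $m\le 2^{n/(k-1)}-2$), but your more explicit handling of the edge cases is equivalent.
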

\begin{proof}
We may assume that $m \ge n > k > 1.$ Observe that $m\le 2^{n/k}$ implies $m\le 2^{n/(k-1)}-2$. Thus, by Theorem \ref{lemYes}, if $p(n)\le n/k$, the answer to {\MLT} is {\sc yes}, and there is a polynomial algorithm to find a suitable assignment. Otherwise, $n\le f(k)$ for some function dependent on $k$
only and {\MLT} can be solved in time $2^{f(k)}m^{O(1)}$ by checking every possible assignment.
\end{proof}

Let $\rho_i$ be the number of equations in  $Az=b$ containing $z_i$, $i=1,\ldots ,n$. Let $\rho=\max_{i\in [n]}\rho_i$ and
let $r$ be the maximum number of variables in an equation of $Az=b$.
Crowston et al. \cite{CGJ} proved that {\MLT} is fixed-parameter tractable if either $r\le r(n)$ for some fixed function $r(n)=o(n)$ or $\rho\le \rho(m)$ for some fixed function $\rho(m)=o(m)$.

For a given $r=r(n)$, we have $m\le \sum_{i=1}^r{n \choose i}$. By Corollary 23.6 in \cite{jukna}, $m\le 2^{nH(r/n)}$, where $H(y)=-y \log_2 y - (1-y) \log_2 (1- y)$, the entropy of $y.$ It is easy to see that if $y=o(n)/n$, then $H(y)=o(n)/n.$ Hence, if $r(n)=o(n)$, then $m\le 2^{o(n)}.$
By Corollary 23.5 in \cite{jukna} (this result was first proved by Kleitman et al. \cite{kleitman}), for a given $\rho=\rho(m)$ we have
$m\le 2^{nH(\rho/m)}.$  Therefore, if $\rho(m)=o(m)$ then $m\le 2^{n\cdot o(m)/m}$ and, thus, $m\le 2^{o(n)}$ (as $n\le m$, if $n\rightarrow \infty$ then $m\rightarrow \infty$ and $o(m)/m\rightarrow 0$).
Thus, both results of Crowston et al. \cite{CGJ} follow from corollary \ref{fptT}.

\vspace{3mm}

Similarly to Corollary \ref{fptT} it is easy to prove the following:

\begin{corollary}
Let $0<a<1$ be a constant. If $m<2^{O(n^{a})}$ then {\MLT} has a kernel with $O(k^{1/(1-a)})$ variables.
\end{corollary}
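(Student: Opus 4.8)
The plan is to mimic the proof of Corollary \ref{fptT}, but instead of concluding fixed-parameter tractability via brute force when $n$ is bounded by a function of $k$, we extract a kernel. As before, we may assume $m\ge n > k > 1$, since otherwise the instance is already small. We run Reduction Rules \ref{rulerank} and \ref{rule1} to obtain an irreducible equivalent system (this preserves the maximum excess by Lemma \ref{mGEn} and takes polynomial time), so we may assume the input system $Az=b$ is irreducible, which gives ${\rm rank}\,A=n$ and hence $n\le m < 2^{O(n^a)}$.

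Next I would determine the threshold on $n$ below which Theorem \ref{lemYes} already answers {\sc yes}. Since $m\le 2^{n/(k-1)}-2$ is implied by $m\le 2^{n/k}$, and $m < 2^{cn^a}$ for some constant $c$, the hypothesis $m\le 2^{n/k}$ holds whenever $cn^a \le n/k$, i.e. whenever $n^{1-a} \ge ck$, i.e. $n \ge (ck)^{1/(1-a)}$. So if $n\ge (ck)^{1/(1-a)}$, then by Theorem \ref{lemYes} the maximum excess is at least $k\cdot w_{\min}\ge k$ (weights are integral, so $w_{\min}\ge 1$), and we output a trivial {\sc yes}-instance of constant size; the desired assignment, if needed, is found in polynomial time by Theorem \ref{lemYes}. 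Otherwise $n < (ck)^{1/(1-a)} = O(k^{1/(1-a)})$, and the irreducible system itself is the kernel: it has $n=O(k^{1/(1-a)})$ variables, and we must also check that its total size is bounded appropriately (after Rule \ref{rule1} no two equations share a left-hand side, so $m\le 2^n$, which is a function of $k$ alone; if one wants integer weights bounded too, cap $k$ at, say, $k\le$ the final excess bound and observe weights exceeding $k$ can be truncated, or simply note $m\le 2^n = 2^{O(k^{1/(1-a)})}$ suffices for a kernel in the sense used in this paper).

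The only genuinely delicate point is the arithmetic relating the hypothesis $m < 2^{O(n^a)}$ to the clean exponent $1/(1-a)$ in the kernel size, and making sure the constant $c$ hidden in $O(n^a)$ is handled uniformly — but this is exactly the same elementary manipulation as in Corollary \ref{fptT}, just with $n/k$ compared against $cn^a$ instead of against $p(n)$. I expect this to be routine. A secondary bookkeeping issue is confirming that an irreducible system with $n$ variables is a legitimate kernel: its number of equations is at most $2^n$ (distinct left-hand sides), and if weights are required to be polynomially bounded one truncates each weight at $k$ without changing the {\sc yes}/{\sc no} answer, since excess $k$ is already achieved once $w_{\min}$ copies of $k$ equations are satisfiable. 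With these observations the corollary follows exactly as stated.
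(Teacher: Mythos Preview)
Your proposal is correct and takes essentially the same approach as the paper, which merely says the corollary follows ``similarly to Corollary~\ref{fptT}'' and gives no further details. The key computation---comparing $cn^a$ against $n/k$ to extract the threshold $n \ge (ck)^{1/(1-a)}$, and invoking Theorem~\ref{lemYes} above it while keeping the irreducible system as the kernel below it---is exactly what the paper intends, and your bookkeeping on the kernel's total size (bounding $m\le 2^n$ and truncating weights) is a reasonable elaboration of a point the paper leaves implicit.
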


By Corollary \ref{fptT} it is easy to show that {\MLT} is in XP.

\begin{proposition}
{\MLT} can be solved in time $O(m^{k+O(1)}).$
\end{proposition}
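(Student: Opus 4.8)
The plan is to combine the reduction rules with Theorem \ref{lemYes}, which confines the instances not settled outright to ones having only $O(\log m)$ variables, and then to brute-force those.

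First I would apply Lemma \ref{mGEn} to replace $Az=b$, in time $m^{O(1)}$, by an equivalent (same maximum excess) irreducible system; I continue to call it $Az=b$, with $m$ equations and $n$ variables. Irreducibility under Rule \ref{rulerank} forces $n={\rm rank}\,A\le m$. If $k=1$ the instance is settled by Remark \ref{rem1} (the answer is \textsc{yes} iff the reduced system is nonempty), and the case $m\le 1$ is trivial, so assume from now on $k\ge 2$ and $m\ge 2$.

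Next I would split on the size of $m$. If $m<k$, then $n\le m<k$, there are $2^n<2^k$ assignments, and evaluating the excess of each in $m^{O(1)}$ time decides the instance in total time $2^k m^{O(1)}\le m^k m^{O(1)}$ (using $m\ge 2$, so $2^k\le m^k$). If $k\le m\le 2^{n/(k-1)}-2$, then Theorem \ref{lemYes} gives maximum excess at least $k\cdot w_{\min}\ge k$ (the weights are positive integers, so $w_{\min}\ge 1$), so the answer is \textsc{yes}, at cost $m^{O(1)}$. In the remaining case $m\ge k$ and $m>2^{n/(k-1)}-2$, we have $2^{n/(k-1)}<m+2$, hence $n<(k-1)\log_2(m+2)$ and
\[
2^n < (m+2)^{k-1} = m^{k-1}\bigl(1+\tfrac{2}{m}\bigr)^{k-1} \le m^{k-1}\bigl(1+\tfrac{2}{m}\bigr)^{m} \le e^2\,m^{k-1},
\]
where the middle inequality uses $k-1\le m$ and the last one uses $m\ln(1+2/m)\le 2$. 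Enumerating all $2^n=O(m^{k-1})$ assignments and evaluating the excess of each then decides the instance in time $O(m^{k-1})\cdot m^{O(1)}=O(m^{k+O(1)})$.

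Putting the cases together, every branch runs in time $O(m^{k+O(1)})$, and so does the whole algorithm. The one step I expect to need care is the displayed estimate: the crude bound $(m+2)^{k-1}\le(2m)^{k-1}=2^{k-1}m^{k-1}$ carries a factor $2^{k-1}$ that is not polynomial in $m$, so one has to exploit $k-1\le m$ (which holds precisely because we are in the case $m\ge k$) to replace the exponent $k-1$ by $m$ inside $(1+2/m)^{\,\cdot}$ and bound it by $e^2$. Everything else — the polynomial-time reductions of Lemma \ref{mGEn}, the case split, and the brute force — is routine.
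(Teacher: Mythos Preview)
Your proof is correct and follows essentially the same approach as the paper: reduce to an irreducible system, then dichotomize on whether the hypothesis of Theorem~\ref{lemYes} holds (\textsc{yes} if it does, brute force over $2^n = m^{O(k)}$ assignments if it does not). The paper uses the slightly coarser threshold $m\le 2^{n/k}$ (which, under $n>k$, implies the hypothesis of Theorem~\ref{lemYes}) and simply concludes $2^n<m^k$ in the complementary case, whereas you work directly with $2^{n/(k-1)}-2$ and squeeze out the sharper $2^n\le e^2 m^{k-1}$; you are also more explicit about the boundary cases $k=1$ and $m<k$, but the underlying argument is the same.
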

\begin{proof}
We may again assume $m \ge n > k>1.$ As in the proof of Corollary \ref{fptT}, if
$m\le 2^{n/k}$ then the answer to {\MLT} is {\sc yes}
and a solution can be found in time $m^{O(1)}$. Otherwise,
$2^n<m^k$ and {\MLT} can be solved in time $O(m^{k+2}).$
\end{proof}

In fact, it is possible to improve this result, as the next theorem shows.

\begin{theorem} \label{thmWP}
{\MLT} is in W[P].
\end{theorem}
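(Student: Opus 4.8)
The plan is to give a nondeterministic algorithm for {\MLT} that runs in FPT time and makes only $O(k\log|x|)$ nondeterministic steps, which is exactly the requirement for membership in W[P]. The key is Theorem~\ref{lemYes}: whenever a reduced system has ``many'' equations relative to its number of variables it is automatically a \textsc{yes}-instance, so the only case in which we need to search at all is the one in which there are so few variables that a whole assignment can be guessed within the W[P] budget.

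First I would handle trivialities. For $k\le 1$ the instance is decided in polynomial time by Reduction Rules~\ref{rulerank} and~\ref{rule1} together with Remark~\ref{rem1} (after the routine removal of any constant equation $0=b$, with the corresponding adjustment of $k$). So assume $k\ge 2$. Apply Rules~\ref{rulerank} and~\ref{rule1} until the system is irreducible; by Lemma~\ref{mGEn} this runs in polynomial time and preserves the maximum excess. Let $n$ and $m$ now denote the numbers of variables and equations of the irreducible system; since it is reduced by Rule~\ref{rulerank} we have $n={\rm rank}\,A\le m$. If $n\le k$, decide the instance deterministically by trying all $2^n\le 2^k$ assignments. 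Hence we may assume $m\ge n>k\ge 2$.

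Now the dichotomy. If $m\le 2^{n/(k-1)}-2$ then, since $k<n\le m$, Theorem~\ref{lemYes} applies and the maximum excess is at least $k\cdot w_{\min}\ge k$ (the weights are integral, so $w_{\min}\ge1$); output \textsc{yes}. Otherwise $m+2>2^{n/(k-1)}$, so $n<(k-1)\log_2(m+2)$, and since the input must list $m$ equations we have $|x|\ge m\ge 3$, whence $m+2\le|x|^2$ and $n<2(k-1)\log_2|x|$. In this case the algorithm nondeterministically guesses an assignment $z\in\{0,1\}^n$ --- that is, it makes at most $2(k-1)\log_2|x|$ nondeterministic steps --- and then, deterministically and in polynomial time, computes the excess of the irreducible system under $z$ (add the weights of the equations satisfied by $z$, subtract those of the equations falsified by $z$) and accepts if and only if this excess is at least $k$.

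In every branch the machine accepts if and only if the irreducible system has maximum excess at least $k$: in the deterministic branches by construction (brute force, or Theorem~\ref{lemYes}), and in the guessing branch because some computation path accepts iff some assignment attains excess at least $k$; by Lemma~\ref{mGEn} this holds iff the original instance is a \textsc{yes}-instance. The total deterministic time is $f(k)|x|^{O(1)}$ and the number of nondeterministic steps is at most $f(k)\log|x|$ with, say, $f(k)=2k$, so {\MLT} is in W[P]. I expect the only mildly delicate point to be the bookkeeping around the small and degenerate cases ($k\le1$, empty systems, constant equations, $n\le k$); the heart of the argument --- invoking Theorem~\ref{lemYes} to dispose of the ``many equations'' case and thereby confine the nondeterminism to the ``few variables'' case, where $n=O(k\log|x|)$ --- is short, and the guessing algorithm itself is routine.
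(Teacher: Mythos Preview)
Your argument is correct, but it differs from the paper's. You use the dichotomy of Theorem~\ref{lemYes} exactly as in the XP proof (Proposition~1), and then observe that in the ``small'' branch the number of variables satisfies $n<2(k-1)\log_2|x|$, so a full assignment can be guessed within the W[P] nondeterminism budget. The paper instead invokes Lemma~\ref{lemH}: since the maximum excess equals the maximum ${\cal H}$-excess and weights are integral, a \textsc{yes}-instance is witnessed by a sequence of at most $k$ equations to mark in successive runs of ${\cal H}$; encoding such a sequence takes $O(k\log|x|)$ bits and checking it is polynomial, so the certificate characterization of W[P] (Lemma~\ref{lemWP}) applies directly, with no case distinction and no appeal to Theorem~\ref{lemYes}. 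Your route is arguably the more natural extension of the XP argument, while the paper's route is self-contained from Lemma~\ref{lemH} alone and avoids the combinatorics behind Theorems~\ref{thmKcon} and~\ref{lemYes}.
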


To prove this theorem we make use of the following lemma from \cite{FlumGrohe06} (Lemma 3.8, p. 48). Here $k(x)$ is the value of the parameter on an instance $x\in \Sigma^{\ast}.$

\begin{lemma} \label{lemWP}
A parameterized problem $(Q,k)$ over the alphabet $\Sigma$ is in W[P] if and only if there are computable functions $f,h: \mathbb{N} \rightarrow \mathbb{N}$, a polynomial $p(X)$, and a $Y \subseteq \Sigma^{\ast} \times \{ 0,1 \}^{\ast}$ such that:
\begin{description}
  \item[(i)] For all $(x,y) \in \Sigma^{\ast}\times \{ 0,1 \}^{\ast}$, it is decidable in time $f(k(x))\cdot p(|x|)$ whether $(x,y) \in Y$.
\item[(ii)] For all $(x,y) \in \Sigma^{\ast}\times \{ 0,1 \}^{\ast}$, if $(x,y) \in Y$ then $|y| = h(k(x)) \cdot \lfloor \log_2|x| \rfloor.$
\item[(iii)] For every $x \in \Sigma^{\ast}$
$$x \in Q \Longleftrightarrow  \text{there exists a } y \in \{ 0,1 \}^{\ast} \text{such that } (x,y) \in Y.$$
\end{description}
\end{lemma}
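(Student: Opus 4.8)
The plan is to verify the characterisation of W[P] given in Lemma~\ref{lemWP} by choosing a certificate that encodes a satisfying assignment, and by exploiting the same dichotomy that underlies the XP membership above: either the system is so ``dense'' that Theorem~\ref{lemYes} forces a \textsc{yes} answer outright, or it is so ``sparse'' that an optimal assignment can be described with only $O(k\log|x|)$ bits and hence guessed within the nondeterministic budget of W[P]. Concretely, I would first run the polynomial-time reduction of Lemma~\ref{mGEn} to make the system irreducible; this preserves the maximum excess and leaves $n=\mathrm{rank}\,A$, so in particular $n\le m\le|x|$. As in the proofs of Corollary~\ref{fptT} and of the preceding XP proposition, I may assume $m\ge n>k>1$ (the cases $k\le 1$ are \textsc{yes} precisely when the reduced system is nonempty, by Remark~\ref{rem1}, and an instance with $n\le k$ is subsumed by the guessing branch below). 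I would set $h(k)=k+1$ and take the certificate length to be exactly $h(k)\lfloor\log_2|x|\rfloor$, as demanded by condition~(ii).

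The relation $Y$ is then defined through the following polynomial-time verifier on input $(x,y)$. It reduces $x$, computes $n$ and $m$, and compares $m$ with $2^{n/k}$ (equivalently, it compares the integers $m^{k}$ and $2^{n}$, of bit-length at most $O(k|x|)$). If $m\le 2^{n/k}$, then $m\le 2^{n/(k-1)}-2$ as observed in Corollary~\ref{fptT}, so since the weights are integral and $m\ge k$, Theorem~\ref{lemYes} guarantees maximum excess at least $k$; in this branch the verifier accepts every $y$ of the correct length. If instead $m>2^{n/k}$, then $n<k\log_2 m\le k\log_2|x|$, so the first $n$ bits of $y$ specify an assignment $z\in\{0,1\}^{n}$; the verifier reads them, evaluates the excess of $z$ in polynomial time, ignores the remaining padding bits, and accepts iff this excess is at least $k$.

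With this definition, condition~(iii) holds: in the dense branch both sides of the equivalence are \textsc{yes}, while in the sparse branch $\exists\,y\,(x,y)\in Y$ is equivalent to the existence of an assignment of excess at least $k$, that is, to the maximum excess being at least $k$. Condition~(i) holds because reduction, the comparison of $m^{k}$ with $2^{n}$, and the excess evaluation are all polynomial in $|x|$ and $k$, the $k$-dependence being absorbed into $f$. Condition~(ii) is the bookkeeping I expect to be the one delicate point: in the sparse branch I must certify that $n$ bits really fit into $h(k)\lfloor\log_2|x|\rfloor$ bits, which follows from $n<k\log_2|x|<k(\lfloor\log_2|x|\rfloor+1)\le(k+1)\lfloor\log_2|x|\rfloor$ for $|x|\ge 2$, and I must pad every accepted certificate to exactly this length (rejecting any other length). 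The genuine subtlety is therefore not any single estimate but the need to force two qualitatively different situations---an unconditional \textsc{yes} witnessed by an arbitrary padded string, and an actual guessed assignment verified against the excess threshold---through the single fixed-length certificate format required by Lemma~\ref{lemWP}.
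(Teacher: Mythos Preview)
Your proposal does not address Lemma~\ref{lemWP} at all. That lemma is a general machine-model characterisation of the class W[P]; it has nothing to do with \textsc{Max Lin}, and the paper does not prove it---it is quoted verbatim from Flum and Grohe's textbook and then \emph{applied}. What you have written is an argument that \textsc{Max Lin AA} satisfies the three conditions of Lemma~\ref{lemWP}, which is a proof of Theorem~\ref{thmWP}, not of the lemma. A proof of the lemma itself would have to start from the definition of W[P] (nondeterministic fpt-time Turing machines making at most $f(k)\log|x|$ nondeterministic steps) and establish the equivalence with the certificate description; your text never engages with that.

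If your intended target was in fact Theorem~\ref{thmWP}, then your route is sound in outline but genuinely different from the paper's. The paper's certificate is a sequence of at most $k$ \emph{equation indices} to mark in Algorithm~$\cal H$; correctness rests on Lemma~\ref{lemH} (maximum excess equals maximum $\cal H$-excess), and the certificate format is uniform across all instances. You instead recycle the XP dichotomy: in the dense branch you accept any padded string, and in the sparse branch you guess a full assignment, which fits because $n<k\log_2 m$. This avoids Lemma~\ref{lemH} but forces a case split inside the verifier and some fragile length arithmetic---your inequality $k(\lfloor\log_2|x|\rfloor+1)\le(k+1)\lfloor\log_2|x|\rfloor$ is equivalent to $\lfloor\log_2|x|\rfloor\ge k$, not to $|x|\ge 2$ as you claim, so you still need to dispose of the instances with $|x|<2^{k}$ separately (e.g.\ by brute force inside $f(k)$).
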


\begin{proof}[Proof of Theorem \ref{thmWP}]
Recall from Lemma \ref{lemH} that the maximum excess of $Az=b$ is at least
$k$  if and only if
we can run algorithm $\cal H$ a number of times and get a total weight of marked equations at least $k$.

Suppose we are given a sequence $e_1, \ldots, e_l$ of equations to mark in each
iteration of $\cal H$. We can, at the $i$'th iteration of $\cal H$, mark equation $e_i$ as
long as $e_i$ is still in the system. If we are able to mark all the
equations $e_1, \ldots e_l$, we can then check that the total weight of these
marked equations is at least $k$. If it is, then we know we have a
{\sc yes}-instance.
Conversely, if the system has a maximum excess of at least $k$, then there will be some sequence $e_1, \ldots, e_l$ that gives us a total weight of marked equations at least $k$.
Furthermore, by integrality of the weights, we may assume that $l \le k$.
We use this idea to construct a set $Y$ that satisfies the conditions of Lemma \ref{lemWP}.

Firstly we show that a sequence of $l \le k$ equations can be encoded as a string $y \in \{ 0,1 \}^{\ast}$ of length $2k \cdot \lfloor \log_2|x| \rfloor$, where $x$ is an instance of {\MLT}. Let the equations be numbered from $1$ to $m$, then we can express a sequence of equations $e_1, \ldots e_l$, as a sequence of $k$ integers between $0$ and $m$ (if $l < k$ then we end the sequence with $k-l$ zeroes). Each integer between $0$ and $m$ can be expressed by a string in $\{ 0,1 \}^{\ast}$ of length at most $\lceil \log_2m \rceil \le \lceil \log_2|x| \rceil$, so certainly it can be expressed by a string of length $2 \lfloor \log_2|x| \rfloor$. Therefore we can express the $k$ integers as a string of length $2k \cdot \lfloor \log_2|x| \rfloor$.

For an instance $x$ of {\MLT} and a string $y \in \{ 0,1 \}^{\ast}$, let us call $y$ a \emph{certificate for $x$} if $|y| = 2k \cdot \lfloor \log_2|x| \rfloor$ and $y$ encodes a sequence of $k$ integers corresponding to a sequence of equations $e_1, \ldots, e_l$ in $x$, such that by marking each equation in turn in iterations of $\cal H$, we get a set of marked equations of weight at least $k$.
It follows that $x$ is a {\sc yes}-instance if and only if there exists a certificate for $x$. Furthermore we can check in polynomial time whether $y$ is a certificate of $x$ by trying to convert $y$ into a sequence of equations and running algorithm $\cal H$ marking those equations. (This is in fact a stronger result than we require for this proof - we only need that the algorithm is fixed-parameter tractable rather than polynomial.)

We now let
$$Y = \{ (x,y) \in \Sigma^{\ast}\times \{ 0,1 \}^{\ast} | x \text{ is a {\sc yes}-instance of {\MLT} and } y \text{ is a certificate of } x \}$$
and let $Q$ be the set of all {\sc yes}-instances of {\MLT}.
By definition of $Y$ and the definition of a certificate, conditions (ii) and (iii) of Lemma \ref{lemWP} are satisfied.
As we can determine in polynomial time whether $y$ is a certificate for $x$, condition (i) is also satisfied. Therefore, by Lemma \ref{lemWP}, {\MLT} is in W[P].

\end{proof}

\subsection{Max $r$-Lin AA, Max Exact $r$-SAT AA and Max $r$-CSP AA}

Using Theorem \ref{lemYes} we can prove the following two results.

\begin{corollary}\label{cor1}
Let $r\ge 2$ be a fixed integer. Then {\sc Max $r$-Lin AA} has a kernel with $O(k \log k)$ variables and can be solved in time $2^{O(k \log k)} + m^{O(1)}$.
\end{corollary}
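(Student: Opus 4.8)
The plan is to run the two reduction rules of Section~\ref{CSPsec}, then combine Theorem~\ref{lemYes} with the elementary observation that an $r$-Lin system on $n$ variables can have only polynomially many distinct left-hand sides. First I would apply Rules~\ref{rulerank} and \ref{rule1} exhaustively, obtaining in polynomial time an irreducible system $A'z=b'$; by Lemma~\ref{mGEn} it has the same maximum excess as the input, and since neither rule ever increases the number of variables occurring in an equation, $A'z=b'$ is again a {\sc Max $r$-Lin AA} instance with positive integral weights. Write $n' = {\rm rank}\,A'$, which now equals the number of variables of $A'z=b'$ (after Rule~\ref{rulerank} every remaining column is part of an independent set, so every remaining variable actually occurs in some equation), and let $m'$ be the number of equations. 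Since no two equations of $A'z=b'$ have the same left-hand side and each is a nonempty subset of $[n']$ of size at most $r$, we get $m' \le \sum_{i=1}^{r}\binom{n'}{i} = O\big((n')^{r}\big)$. (The cases $k\le 1$ are trivial: for $k\le 0$ the answer is {\sc yes} since the average excess is $0$, and for $k=1$ Remark~\ref{rem1} makes the answer {\sc yes} iff $A'z=b'$ is nonempty; so assume $k\ge 2$.)

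Next I would split on $m'$. If $m' < k$, then $A'z=b'$ has at most $k-1$ equations, each on at most $r$ of the $n'$ occurring variables, so $n' \le r(k-1) = O(k)$ and the instance is already a kernel; decide it by trying all $2^{n'}$ assignments. If $k \le m' \le 2^{n'/(k-1)} - 2$, then Theorem~\ref{lemYes} applies and the maximum excess is at least $k\cdot w_{\min} \ge k$ (the weights being positive integers), so this is a {\sc yes}-instance, and Theorem~\ref{lemYes} even produces a witnessing assignment in polynomial time. In the only remaining case, $m' > 2^{n'/(k-1)} - 2$, so
$$2^{n'/(k-1)} \le m' + 2 = O\big((n')^{r}\big),$$
whence $n'/(k-1) = O(r\log n')$, i.e.\ $n' = O(k\log n')$ with the hidden constant depending only on the fixed $r$. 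This inequality forces $n' = O(k\log k)$: the map $n\mapsto n/\log_2(n+1)$ is increasing, and substituting $n = c\,k\log_2 k$ gives $n/\log_2(n+1) = \Theta(ck)$, so an $O(k)$ bound on $n'/\log_2(n'+1)$ already caps $n'$ at $O(k\log k)$.

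Thus in every case we either answer outright or are left with an equivalent {\sc Max $r$-Lin AA} instance on $O(k\log k)$ variables (hence $O((k\log k)^{r})$ equations), which is the claimed kernel. Finally, solving this kernel by brute force over its $2^{O(k\log k)}$ assignments, together with the polynomial cost of the reductions and of Theorem~\ref{lemYes}, yields running time $2^{O(k\log k)} + m^{O(1)}$ (a product $2^{O(k\log k)}\cdot m^{O(1)}$ rewrites as such a sum). I expect the only delicate point to be the last chain of inequalities — turning $n' = O(k\log n')$ into $n' = O(k\log k)$ while tracking the dependence on the fixed $r$ — since this is exactly where the $k\log k$ bound is produced; everything else is a routine assembly of Lemma~\ref{mGEn}, Theorem~\ref{lemYes}, and the subset-counting estimate.
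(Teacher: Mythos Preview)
Your proposal is correct and follows essentially the same approach as the paper: reduce to an irreducible system, use the bound $m'=O((n')^r)$ on the number of distinct left-hand sides, and invoke Theorem~\ref{lemYes} to conclude that either the instance is {\sc yes} or $n'=O(k\log k)$. The paper's proof is terser (it simply notes that $n^r\le 2^{n/(k-1)}-2$ once $n\ge c(r)k\log_2 k$), whereas you spell out the case split $m'<k$, handle $k\le 1$ separately, and justify the implication $n'=O(k\log n')\Rightarrow n'=O(k\log k)$ explicitly, but the underlying argument is the same.
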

\begin{proof}
Observe that $m\le n^r$ and $n^r\le 2^{n/(k-1)}-2$ if $n\ge c(r)k\log_2 k$ provided $c(r)$ is large enough ($c(r)$ depends only on $r$). Thus, by Theorem \ref{lemYes}, if $n\ge c(r)k\log_2 k$ then the answer to {\sc Max $r$-Lin AA} is {\sc yes}. Hence, we obtain a problem kernel with at most $c(r)k\log_2 k=O(k \log k)$ variables and, therefore, can solve {\sc Max $r$-Lin AA} in time $2^{O(k \log k)} + m^{O(1)}$.
\end{proof}

\begin{corollary}\label{cor2}
Let $r\ge 2$ be a fixed integer. Then there is a bikernel from {\sc Max Exact $r$-SAT} to {\sc Max $r$-Lin AA} with $O(k \log k)$ variables. Moreover, {\sc Max Exact $r$-SAT} has a kernel with $O(k \log k)$ variables and can be solved in time $2^{O(k \log k)} + m^{O(1)}$.
\end{corollary}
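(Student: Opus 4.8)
The plan is to pass from {\sc Max Exact $r$-SAT} to {\sc Max $r$-Lin AA} through the Fourier/polynomial form of the objective, and then apply Corollary \ref{cor1}. First I would rewrite the number of satisfied clauses as a pseudo-boolean function over the domain $\{-1,+1\}^n$. A clause $C$ on the $r$ variables $x_{j_1},\dots,x_{j_r}$ is falsified on exactly one sub-assignment, say $x_{j_i}=\epsilon_i$ for $1\le i\le r$ with $\epsilon\in\{-1,+1\}^r$, so the $\{0,1\}$-indicator of ``$C$ is satisfied'' equals $1-2^{-r}\prod_{i=1}^r(1+\epsilon_i x_{j_i})$. Summing this over the $m$ clauses and expanding each product, the number of satisfied clauses takes the form $(1-2^{-r})m+2^{-r}g(x)$, where $g(x)=\sum_{1\le|S|\le r}c_S\prod_{i\in S}x_i$ has $\hat g(\emptyset)=0$, degree at most $r$, and integer coefficients $c_S$ (each $c_S$ being a $\pm1$-weighted sum over clauses). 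Hence the instance is a {\sc yes}-instance exactly when $\max_x g(x)\ge k$. By Proposition \ref{lemFS} and the bijection preceding it, since $\hat g(\emptyset)=0$, $\max_x g(x)$ is the maximum excess of the weighted system $Az=b$ over $\mathbb{F}_2$ whose equations are $\sum_{i\in S}z_i=b_S$ with weight $|c_S|$ for each $S$ with $c_S\neq 0$, with $b_S$ read off from the sign of $c_S$. This system is produced in polynomial time, every equation has at most $r$ variables, all weights are positive integers, and after collecting monomials no two equations share a left-hand side; so $(Az=b,k)$ is an equivalent instance of {\sc Max $r$-Lin AA}. Composing this polynomial-time reduction with the kernelization from Corollary \ref{cor1} (which reduces the number of variables of a {\sc Max $r$-Lin AA} instance to $O(k\log k)$) gives the desired bikernel from {\sc Max Exact $r$-SAT} to {\sc Max $r$-Lin AA} with $O(k\log k)$ variables.

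For the self-kernel and the running-time bound I would argue as in the proof of Corollary \ref{cor1}: reduce $Az=b$ by Rules \ref{rulerank} and \ref{rule1} to an irreducible system $A'z=b'$ with the same maximum excess (Lemma \ref{mGEn}) on $n'$ variables, where $w_{\min}\ge 1$ and $m'\le (n')^r$. If $n'\ge c(r)k\log_2 k$ for a suitable constant $c(r)$ depending only on $r$, then $m'\le 2^{n'/(k-1)}-2$ and Theorem \ref{lemYes} forces the maximum excess to be at least $k$, so we output a fixed {\sc yes}-instance; otherwise $n'=O(k\log k)$ and the instance is decided by evaluating the excess of $A'z=b'$ on all $2^{n'}=2^{O(k\log k)}$ assignments, for total time $2^{O(k\log k)}+m^{O(1)}$. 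To state the outcome as a kernel of {\sc Max Exact $r$-SAT} with $O(k\log k)$ variables, I would additionally give a polynomial-time transformation back from {\sc Max $r$-Lin AA} to {\sc Max Exact $r$-SAT}: introduce $r-1$ fresh auxiliary variables, replace each weight-$w$ equation $\sum_{i\in S}z_i=b_S$ (with $1\le|S|\le r$) by $w$ copies of the family of clauses of width exactly $r$ over $\{z_i:i\in S\}$ together with $r-|S|$ of the auxiliary variables whose conjunction is logically equivalent to the equation, and verify that the number of satisfied clauses of the new instance equals $(1-2^{-r})M+2^{-r}\cdot 2^{r-1}g(x)$, so that the new instance is a {\sc yes}-instance of {\sc Max Exact $r$-SAT} with parameter $2^{r-1}k$ exactly when $\max_x g(x)\ge k$. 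Since $r$ is fixed, this changes the parameter only by the constant factor $2^{r-1}$ and the number of variables only by the additive constant $r-1$, so composing it with the bikernel above yields a kernel of {\sc Max Exact $r$-SAT} with $O(k\log k)$ variables, and with it the claimed running time.

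The genuinely routine parts are the monomial expansion and the appeals to Proposition \ref{lemFS}, Theorem \ref{lemYes} and Corollary \ref{cor1}. The step that requires care is the last one: checking that a width-$s$ parity equation of weight $w$ can be simulated by clauses of width exactly $r$ (with correct multiplicities) so that the ``above $(1-2^{-r})M$'' value of the new instance moves in exact proportion with the equation's excess — this is where one must be precise about which sub-assignments violate which padded clauses. Everything else is bookkeeping.
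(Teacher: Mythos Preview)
Your proposal is correct and follows essentially the same route as the paper: rewrite the satisfied-clause count via the product expansion $1-2^{-r}\prod(1+\epsilon_i x_{j_i})$ to obtain the polynomial $g$ (which coincides with the paper's $g$), pass to an equivalent {\sc Max $r$-Lin AA} instance via Proposition~\ref{lemFS}, invoke Corollary~\ref{cor1} for the bikernel, and then transform back to {\sc Max Exact $r$-SAT} for the self-kernel. The only difference is that the paper delegates the last step to the construction in \cite{AlonSODA2010}, whereas you spell it out explicitly via padding with $r-1$ auxiliary variables and replicating each equation according to its weight; your identity $(\text{\# satisfied})=(1-2^{-r})M+2^{-r}\cdot 2^{r-1}\cdot(\text{excess})$ and the resulting parameter $k'=2^{r-1}k$ are indeed correct.
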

\begin{proof}
Let $F$ be an $r$-CNF formula with clauses $C_1,\ldots ,C_m$
in the variables $x_1, x_2, \ldots ,x_n$. We may assume that $x_i \in \{-1,1\}$, where $-1$ corresponds to {\sc true}.
For $F$, following \cite{AlonSODA2010} consider
\[
g(x)=\sum_{C\in F}[1-\prod_{x_i\in {\rm var}(C)}(1+\epsilon_ix_i)],
\]
where ${\rm var}(C)$ is the set of variables of $C$, $\epsilon_i\in \{-1,1\}$ and $\epsilon_i=1$ if and only if $x_i$
is in~$C$. It is shown in \cite{AlonSODA2010} that the answer to {\sc Max Exact $r$-SAT} is {\sc yes} if and only if there is a truth assignment $x^0$ such that $g(x^0)\ge k.$

Algebraic simplification of $g(x)$ will lead us to
Fourier expansion of $g(x)$: \begin{equation}g(x)=\sum_{S\in {\cal F}}c_S\prod_{i\in S}x_i,\end{equation} where ${\cal F}=\{\emptyset\neq S\subseteq [n]:\ c_S\neq 0, |S|\le r \}$. Thus, $|{\cal F}|\le n^r$. By Proposition \ref{lemFS}, $\sum_{S\in {\cal F}}c_S\prod_{i\in S}x_i$ can be viewed as an instance of {\sc Max $r$-Lin} and, thus, we can reduce {\sc Max Exact $r$-SAT}
into {\sc Max $r$-Lin} in polynomial time (the algebraic simplification can be done in polynomial time as $r$ is fixed). By Corollary  \ref{cor1}, {\sc Max $r$-Lin} has a kernel with $O(k \log k)$ variables. This kernel is a bikernel from {\sc Max Exact $r$-SAT} to {\sc Max $r$-Lin}. Using this bikernel, we can solve {\sc Max Exact $r$-SAT} in time $2^{O(k \log k)} + m^{O(1)}$.

It remains to use the transformation described in \cite{AlonSODA2010} of a bikernel from {\sc Max Exact $r$-SAT} to {\sc Max $r$-Lin} into a kernel of {\sc Max Exact $r$-SAT}. This transformation gives us a kernel with $O(k \log k)$ variables.
\end{proof}

In the Boolean Max-$r$-Constraint Satisfaction Problem ({\sc Max-$r$-CSP}), we are given a collection of Boolean functions, each involving at most $r$ variables, and asked to find a truth assignment that satisfies as many functions as possible. We will consider the following parameterized version of {\sc Max-$r$-CSP}. We are given a set $\Phi$ of Boolean functions, each involving at
most $r$ variables, and a collection ${\cal F}$ of $m$ Boolean functions, each $f \in \cal F$ being a
member of $\Phi$, and each acting on some subset of
the $n$ Boolean variables $x_1,x_2, \ldots ,x_n$ (each $x_i\in \{-1,1\}$). We are to decide whether there is a truth assignment to the $n$ variables such that the total number of satisfied functions is at least $E+k2^{-r}$, where  $E$ is the average value of the number of satisfied functions.

\begin{corollary}\label{cor3}
Let $r\ge 2$ be a fixed integer. Then there is a bikernel from {\sc Max $r$-CSP} to {\sc Max $r$-Lin AA} with $O(k \log k)$ variables. {\sc Max $r$-CSP} can be solved in time $2^{O(k \log k)} + m^{O(1)}$.
\end{corollary}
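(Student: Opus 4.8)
The plan is to mimic the proof of Corollary~\ref{cor2} almost verbatim, replacing the specific function built from an $r$-CNF formula by a general linear combination of the indicator functions of the constraints in $\cal F$. First I would fix, for each function $f\in\Phi$ and each way $f$ can act on an ordered $r$-tuple of variables, the Fourier expansion of the $\{0,1\}$-valued indicator $[f\text{ is satisfied}]$ on $\{-1,+1\}^r$; since $r$ and $|\Phi|$ are constants, all of these are computable in constant time. Summing over the $m$ constraints in $\cal F$ gives a function $h(x)=\sum_{f\in{\cal F}}[f\text{ satisfied}]$ on $\{-1,+1\}^n$ whose Fourier expansion $h(x)=\sum_{S\subseteq[n]}c_S\prod_{i\in S}x_i$ has $c_S=0$ whenever $|S|>r$, so it has at most $n^r$ nonzero non-constant coefficients; moreover the constant coefficient $\hat h(\emptyset)$ is exactly the average number $E$ of satisfied functions (averaging each monomial $\prod_{i\in S}x_i$, $S\neq\emptyset$, over the uniform cube gives $0$). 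The whole expansion is obtained in time $m^{O(1)}$ because each of the $m$ constraints contributes a constant-size table that we add into a running dictionary keyed by $S$.

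Next I would apply Proposition~\ref{lemFS} to the polynomial $g(x):=h(x)-\hat h(\emptyset)=\sum_{S\in{\cal F}'}c_S\prod_{i\in S}x_i$ with ${\cal F}'=\{\emptyset\neq S:\ c_S\neq 0\}$ (note $|{\cal F}'|\le n^r$): the associated weighted system $Az=b$ of linear equations over $\mathbb F_2$, with one equation $\sum_{i\in S}z_i=b_S$ of weight $|c_S|$ per $S\in{\cal F}'$, has maximum excess equal to $\max_x h(x)-E$. Hence deciding whether some assignment satisfies at least $E+k2^{-r}$ functions is equivalent to deciding whether this {\sc Max $r$-Lin} system has maximum excess at least $k2^{-r}$. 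If the original parameter threshold is $E+k2^{-r}$, rescaling shows this is a genuine {\sc Max $r$-Lin AA} instance with parameter $k$ (the $2^{-r}$ factor is absorbed since $r$ is fixed and all $c_S$ are dyadic rationals with bounded denominator; multiplying every weight by $2^r$ makes them integers and turns the threshold into $k$). Then Corollary~\ref{cor1} supplies a kernel of this {\sc Max $r$-Lin AA} instance with $O(k\log k)$ variables, and composing the polynomial-time reduction with that kernelization yields the claimed bikernel from {\sc Max $r$-CSP} to {\sc Max $r$-Lin AA} with $O(k\log k)$ variables. Finally, running a brute-force search over all assignments of the $O(k\log k)$ surviving variables of the bikernel, which has size $2^{O(k\log k)}$, and accounting for the $m^{O(1)}$ preprocessing, solves {\sc Max $r$-CSP} in time $2^{O(k\log k)}+m^{O(1)}$.

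The step I expect to require the most care is the bookkeeping around weights and the parameter: one must verify that after the Fourier simplification the weights $|c_S|$ are rationals of a form that can be cleared to integers by a factor depending only on $r$ (so that {\sc Max $r$-Lin AA}, which is defined for integral weights, actually applies), and that the average value $E$ matches $\hat h(\emptyset)$ exactly so that subtracting the constant term corresponds precisely to passing to the excess. Everything else — the existence of the Fourier tables, the $n^r$ bound on the support, the polynomial running time of the algebraic simplification, and the invocation of Corollaries~\ref{cor1} and~\ref{cor2}'s transformation of a bikernel into the final kernel — is routine given the results already established.
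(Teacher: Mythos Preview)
Your proposal is correct and follows essentially the same route as the paper's proof: the paper uses the specific polynomial $h_f(x)=2^{r-r(f)}\sum_{v\in V_f}\bigl[\prod_j(1+x_{i_j}v_j)-1\bigr]$ from \cite{AGK04}, which is just the indicator function of $f$ rescaled by $2^r$ with the constant term stripped off, so that $h(x)=\sum_f h_f(x)=2^r(s-E)$ already has integral Fourier coefficients and threshold $k$; you reach the identical endpoint by taking the plain sum of indicators and rescaling by $2^r$ afterward. The only cosmetic difference is where the $2^r$ factor is applied, and your remark about clearing denominators is exactly what the paper's formula does implicitly.
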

\begin{proof}
Following \cite{AGK04} for a boolean
function $f$ of $r(f)\le r$ boolean variables
$
x_{i_1},  \ldots , x_{i_{r(f)}},
$ introduce
a polynomial $h_f(x),\ x=(x_1, x_2, \ldots ,x_n)$ as follows. Let $V_f \subset \{-1,1\}^{r(f)}$ denote the set of all
satisfying assignments of $f$. Then
$$
h_f(x) = 2^{r-r(f)}\sum_{(v_1, \ldots ,v_{r(f)}) \in V_f}
[\prod_{j=1}^{r(f)} (1+x_{i_j} v_j) - 1].
$$

Let $h(x)=\sum_{f \in \cal F} h_f(x).$
It is easy to see (cf. \cite{AlonSODA2010}) that the value of $h(x)$ at $x^0$ is precisely $2^r(s-E)$, where $s$ is the number of
the functions satisfied by the truth assignment $x^0$, and $E$ is the
average value of the number of satisfied functions.  Thus, the answer to {\sc Max-$r$-CSP} is {\sc yes} if and only if there is a truth assignment $x^0$ such that $h(x^0)\ge k.$
The rest of the proof is similar to that of Corollary \ref{cor2}.
\end{proof}

\subsection{Lower Bound on Maxima of Pseudo-boolean Functions}\label{PBsec}

\begin{corollary}\label{CLBT}
We have $\max_{x\in \{-1,+1\}^n}f(x) \ge \hat{f}(\emptyset) + (1+\lfloor \frac{{\rm rank} A}{\log_2 (|{\cal F}|+2)} \rfloor) \cdot \min_{S\in {\cal F}}|\hat{f}(S)|.$
\end{corollary}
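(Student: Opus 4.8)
The plan is to recast the inequality as a lower bound on the maximum excess of a Max Lin system and then apply Theorem \ref{lemYes}. Write $f$ as in (\ref{eq2}) and let $Az=b$ be the associated weighted system, with one equation $\sum_{i\in S}z_i=b_S$ of weight $|\hat f(S)|$ for each $S\in{\cal F}$; we may assume ${\cal F}\neq\emptyset$, otherwise the statement is vacuous. Set $m=|{\cal F}|$, $t={\rm rank}\,A$, and $w_{\min}=\min_{S\in{\cal F}}|\hat f(S)|$. By Proposition \ref{lemFS}, $\max_x f(x)=\hat f(\emptyset)+(\text{maximum excess of }Az=b)$, so it suffices to show that the maximum excess of $Az=b$ is at least $\bigl(1+\lfloor t/\log_2(m+2)\rfloor\bigr)w_{\min}$.

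First I would make the system irreducible while keeping $m$, $t$, and $w_{\min}$ fixed. Applying Reduction Rule \ref{rulerank} once produces a system on $t$ variables whose rank equals its number of variables; by Lemma \ref{mGEn} the maximum excess is unchanged, and the weights and the number of equations are untouched. The point requiring care is that the deletion does not identify two equations, i.e.\ that Reduction Rule \ref{rule1} never becomes applicable, so that the reduced system is genuinely irreducible: this holds because the projection onto the chosen basis coordinates is injective on the row space of $A$, so two distinct left-hand sides remain distinct. (Equivalently, one may first apply a non-singular linear substitution carrying the span of the left-hand sides onto ${\rm span}(e_1,\dots,e_t)$ --- which preserves the maximum excess, distinctness of left-hand sides, and all weights --- and then drop the $n-t$ variables that no longer occur.) Call the resulting irreducible system $A'z=b'$; it has $t$ variables, $m$ equations, minimum weight $w_{\min}$, and, since its left-hand sides span $\mathbb{F}^t_2$, we also have $m\ge t$.

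Now set $k=1+\lfloor t/\log_2(m+2)\rfloor$. If $k=1$, then $\bigl(1+\lfloor t/\log_2(m+2)\rfloor\bigr)w_{\min}=w_{\min}$, which is exactly Remark \ref{rem1} for the nonempty irreducible system $A'z=b'$. If $k\ge 2$, then $k-1\le t/\log_2(m+2)$ gives $m+2\le 2^{t/(k-1)}$, i.e.\ $m\le 2^{t/(k-1)}-2$, the upper bound in Theorem \ref{lemYes}; moreover $k\ge 2$ forces $t\ge\log_2(m+2)\ge 2$, so $k-1\le t/\log_2(m+2)\le m/2$ using $m\ge t$, whence $k\le m$. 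Theorem \ref{lemYes} then gives maximum excess at least $k\,w_{\min}$. In either case, combining with Proposition \ref{lemFS} yields the claimed inequality.

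The main obstacle is the irreducibility step: Theorem \ref{lemYes} assumes an irreducible system, and although Rule \ref{rulerank} is precisely what restores ${\rm rank}\,A=n$, one must check it does not trigger Rule \ref{rule1}, since a careless combined reduction could shrink $w_{\min}$ or the rank and destroy the bound. The remainder is just choosing $k$ as large as the hypotheses of Theorem \ref{lemYes} allow and disposing of the boundary case $k=1$ via Remark \ref{rem1}.
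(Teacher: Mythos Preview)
Your proof is correct and follows essentially the same route as the paper's: pass to the associated Max Lin system, apply Rule~\ref{rulerank} to reduce to ${\rm rank}\,A$ variables, and invoke Theorem~\ref{lemYes} (via Proposition~\ref{lemFS}) with $k=1+\lfloor t/\log_2(m+2)\rfloor$. You are simply more explicit than the paper about three points it leaves implicit---that Rule~\ref{rulerank} cannot trigger Rule~\ref{rule1} (your injectivity argument is right: the projection onto the chosen basis columns is a linear isomorphism on the $t$-dimensional row space), that $k\le m$ holds, and that the boundary case $k=1$ is handled by Remark~\ref{rem1}.
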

\begin{proof}
Consider the system $Az=b$ associated with the Fourier expansion of $f$ according to the bijection described before Proposition \ref{lemFS}.
We may assume that the weighted system $Az=b$ has been simplified using Rule~\ref{rulerank} and, thus,
its number $n'$ of variables equals ${\rm rank} A$. Note that $n'\le m$, where $m$ is the number of equations in $Az=b$.
By Theorem \ref{lemYes}, Proposition \ref{lemFS} and the fact that $\min_{S\in {\cal F}}|\hat{f}(S)| = \min_jw_j$, it follows that if $k\le m\leq 2^{n'/(k-1)}-2$ then
 $$\max_{x\in \{-1,+1\}^n}f(x) - \hat{f}(\emptyset) \ge k\min_{S\in {\cal F}}|\hat{f}(S)|.$$

To complete the proof, recall that $n'={\rm rank} A$, $m=|{\cal F}|$ and observe that the maximum possible (integral) value of $k$  satisfying $m\leq 2^{n'/(k-1)}-2$ is $1+\lfloor \frac{{\rm rank} A}{\log_2 (|{\cal F}|+2)} \rfloor$.
\end{proof}

This bound is tight.
Indeed, consider the function $f(x)=-\sum_{\emptyset\neq S\subseteq [n]}\prod_{i\in S}x_i.$ Observe that $n={\rm rank} A$, $|{\cal F}|=2^n-1$ and, thus, $\max_{x\in \{-1,+1\}^n}f(x)\ge 1+\lfloor \frac{{\rm rank} A}{\log_2 (|{\cal F}|+2)} \rfloor = 1.$
If $x=(1,1,\ldots ,1)$ then $f(x)=-|{\cal F}|$ and if we set some $x_i=-1$ then after canceling out of monomials we see that $f(x)=1$. Therefore,  $\max_{x\in \{-1,+1\}^n}f(x)=1,$ and, thus, the bound of corollary \ref{CLBT} is tight. It is easy to see that the bound remains tight if we delete one monomial from $f(x)$. A sightly more complicated function showing that  the bound is tight is as follows: $g(x)=-\sum_{\emptyset\neq S\subseteq [n_1]}\prod_{i\in S}x_i - \sum_{S\in {\cal G}}\prod_{i\in S}x_i,$ where $n_1<n$ and ${\cal G}=\{S:\ \emptyset\neq S\subseteq [n], [n_1]\cap S = \emptyset\}.$

\begin{remark}
{\rm Consider {\sc Max Lin} with irreducible system $Az=b$ in which every equation is of weight 1. Then the bound of Theorem \ref{CLBT} gives an $(1/2 + (1+\delta)/m)$-approximation for {\sc Max Lin}, where $\delta=\lfloor n/\log_2 (m +2) \rfloor$. This is of interest since by the result of H{\aa}stad mentioned in Section \ref{section:intro}, $(1/2+\epsilon)$-approximation is impossible for any constant $\epsilon >0$ unless P=NP.
}\end{remark}

\vspace{3mm}

\medskip

\paragraph{Acknowledgments}
Gutin is thankful to Ilia Krasikov and Daniel Marx for discussions on the topic of the paper.
Research of Gutin, Jones and Kim was supported in part by an EPSRC
grant. Research of Gutin was also supported
in part by the IST Programme of the European Community, under the
PASCAL 2 Network of Excellence. Research of Ruzsa was supported by ERC--AdG
Grant No. 228005 and Hungarian National Foundation for Scientific
Research (OTKA), Grants No. 61908.

\urlstyle{rm}

\end{document}